\newcounter{algsubstate}
\newtheorem{theorem}{Theorem}%[section]
\newtheorem{definition}{Definition}%[section]
\newtheorem{lemma}{Lemma}%[section]
\newcommand{\cY}{\mathcal{Y}}
\newcommand{\cM}{\mathcal{M}}
\newcommand{\cA}{\mathcal{A}}
\newcommand{\cT}{\mathcal{T}}
\newcommand{\cB}{\mathcal{B}}
\newcommand{\cI}{\mathcal{I}}
\newcommand{\cH}{\mathcal{H}}
\newcommand{\bX}{\mathbf{x}}
\newcommand{\bY}{\mathbf{y}}
\newcommand{\bV}{\mathbf{v}}
\newcommand{\bR}{\mathbf{r}}
\newcommand{\bZ}{\mathbf{z}}
\newcommand{\bF}{\mathbf{f}}
\newcommand{\poly}{\mathsf{poly}}
\newcommand{\dec}{\mathrm{dec}}
\newcommand{\supp}{\mathsf{supp}}
\newcommand{\vecc}{\mathsf{vecc}}
\newcommand{\test}{\mathsf{test}}
\begin{document}

\title{Efficient designs for threshold group testing\\ without gap}

\author{Thach V. Bui~\IEEEmembership{Member, IEEE}, Yeow Meng Chee~\IEEEmembership{Fellow, IEEE}, and Van Khu Vu~\IEEEmembership{Member, IEEE}
\thanks{Thach V.~Bui, Yeow Meng Chee, and Van Khu Vu are with the Faculty of Engineering, National University of Singapore, Singapore. Email: \{bvthach,ymchee,isevvk\}@nus.edu.sg.}
}

% make the title area
\maketitle

\thispagestyle{plain}
\pagestyle{plain}

%\begin{IEEEkeywords}
%Non-adaptive threshold group testing, combinatorial mathematics, algorithms, sparse recovery.
%\end{IEEEkeywords}

\IEEEpeerreviewmaketitle

\begin{abstract}
Given $d$ defective items in a population of $n$ items with $d \ll n$, in threshold group testing without gap, the outcome of a test on a subset of items is positive if the subset has at least $u$ defective items and negative otherwise, where $1 \leq u \leq d$. The basic goal of threshold group testing is to quickly identify the defective items via a small number of tests. In non-adaptive design, all tests are designed independently and can be performed in parallel. The decoding time in the non-adaptive state-of-the-art work is a polynomial of $(d/u)^u (d/(d-u))^{d - u}, d$, and $\log{n}$. In this work, we present a novel design that significantly reduces the number of tests and the decoding time to polynomials of $\min\{u^u, (d - u)^{d - u}\}, d$, and $\log{n}$. In particular, when $u$ is a constant, the number of tests and the decoding time are $O(d^3 (\log^2{n}) \log{(n/d)} )$ and $O\big(d^3 (\log^2{n}) \log{(n/d)} + d^2 (\log{n}) \log^3{(n/d)} \big)$, respectively. For a special case when $u = 2$, with non-adaptive design, the number of tests and the decoding time are $O(d^3 (\log{n}) \log{(n/d)} )$ and $O(d^2 (\log{n} + \log^2{(n/d)}) )$, respectively. Moreover, with 2-stage design, the number of tests and the decoding time are $O(d^2 \log^2{(n/d)} )$.
\end{abstract}

\section{Introduction}
\label{sec:intro}

The main objective of group testing is to find a small number of defective items in a large population of items~\cite{dorfman1943detection}. Defective items and non-defective items are defined by context. For example, in Covid-19 scenario, defective (respectively, non-defective) items are people who are positive (respectively, negative) to Coronavirus. In standard group testing (SGT), a pool of items is positive if there exists at least one defective item in the pool and negative otherwise. In some problems~\cite{malioutov2017learning,shin2009reactive}, there is a need for cooperation of at least two defective items in an effort to achieve a certain goal. A test on a set of items is positive if the certain goal is attained and negative otherwise. This means the SGT model does not fit to these problems. Damaschke~\cite{damaschke2006threshold} proposes a threshold group testing (TGT) model in which the outcome of a test on a subset of items is positive if the subset has at least $u$ defective items, negative if it has up to $\ell$ defective items, where $0 \leq \ell < u$, and arbitrary otherwise. The parameter $g = u - \ell - 1$ is called the gap. It is obvious that when $u = 1$, TGT reduces to SGT. When $\ell + 1 = u$, TGT is called \emph{TGT without gap}. 

%For example, consider the problem of finding sets of genes that causes cancer via a hypergraph of gene interaction as follows. Genes are represented as vertices of an undirected hypergraph and a reaction between two or more genes as a hyperedge. The certain goal here is causing cancer. A hyperedge is defective if the set of genes in it causes cancer and non-defective otherwise. A defective hyperedge is minimal if removing any vertex in it will make the set of all vertices in it not trigger cancer. The items in all minimal defective hyperedges are called \emph{defective items} and the remaining items are called non-defective items. A test on a set of genes is positive if it contains at least one defective hyperedge and negative otherwise. Because a hyperedge has the size of at least two, i.e., the hyperedge contains at least two vertices, it requires at least two defective items to cause cancer.

In some problems~\cite{malioutov2017learning,shin2009reactive}, there is a need for cooperation of at least two defective items in an effort to achieve a certain goal. A test on a set of items is positive if the certain goal is attained and negative otherwise. This means the SGT model does not fit to these problems. A natural generalization of SGT is to introduce a threshold for making a test positive. Damaschke~\cite{damaschke2006threshold} proposes a threshold group testing (TGT) model in which the outcome of a test on a subset of items is positive if the subset has at least $u$ defective items, negative if it has up to $\ell$ defective items, where $0 \leq \ell < u$, and arbitrary otherwise. The parameter $g = u - \ell - 1$ is called the gap. It is obvious that when $u = 1$, TGT reduces to SGT. When $\ell + 1 = u$, TGT is called \emph{TGT without gap}. We illustrate TGT without gap for $u = 4$ versus standard group testing in Fig.~\ref{fig:ThresholdWGap}. The red and black dots represent defective and non-defective items, respectively. A subset containing defective items and/or non-defective items is a blue circle containing red and/or black dots. The outcome of a test on a subset of items is positive (\textcolor{red}{$+$}) or negative ({$-$}). In SGT (``Standard'' in the figure), the outcome of a test on a subset of items is positive if the subset has at least one red dot, and negative otherwise. In TGT without gap (``Threshold'' in the figure), the outcome of a test on a subset of items is positive if the subset has at least $u = 4$ red dots, negative if the subset has up to $u - 1 = 3$ red dots.

\begin{figure*}
\centering
\includegraphics[scale=0.4]{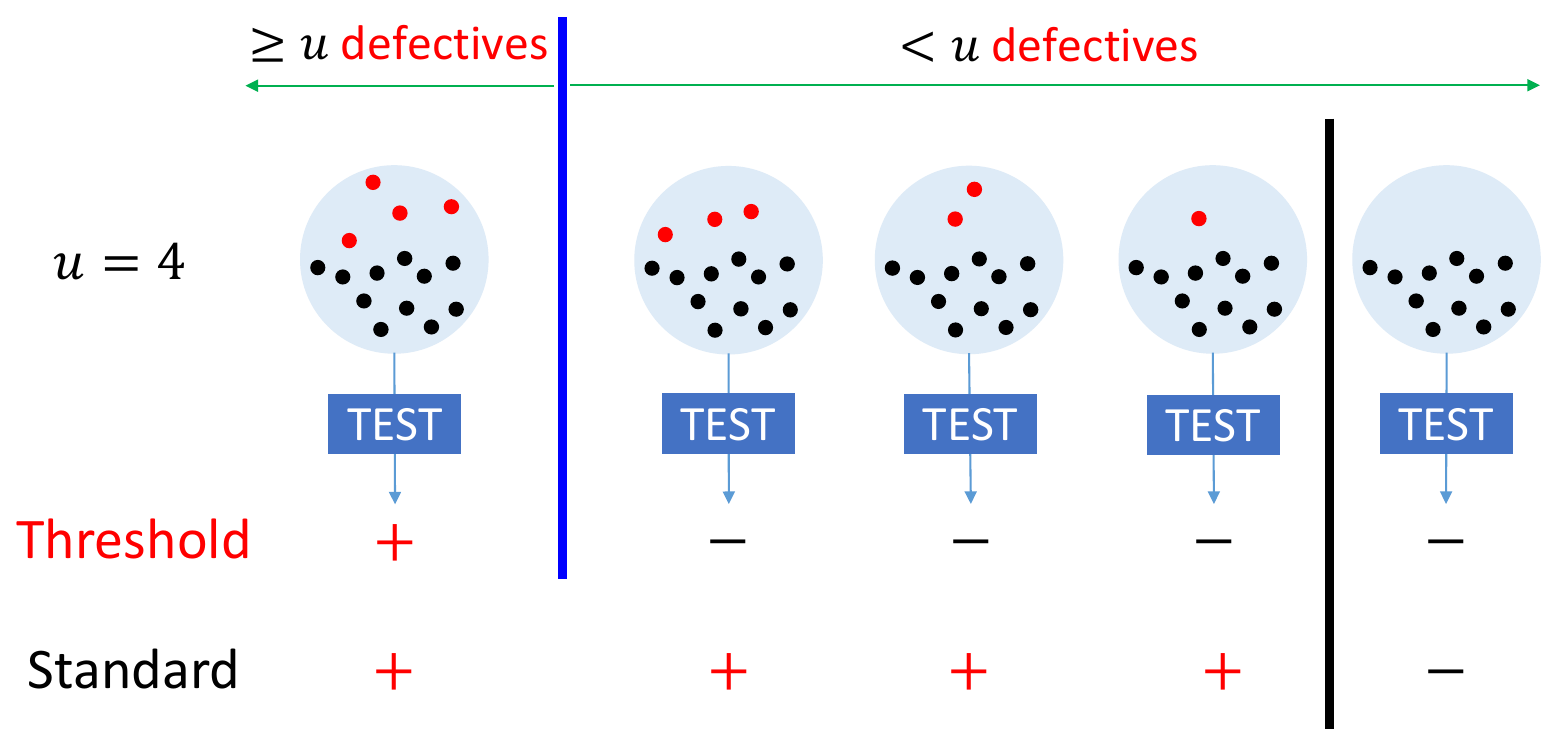}
\caption{Illustration of threshold group testing without gap for $u = 4$ versus standard group testing.}
\label{fig:ThresholdWGap}
\end{figure*}

There are many criteria in tackling group testing. Here we only mention four main criteria. The first criterion is the number of stages. When the number of stages is one, i.e., non-adaptive design, all tests are designed in priori and independently. Since the tests in this design can be performed simultaneously, it saves times, and therefore gives a significant advantage for some applications with a fast response demand such as identifying infected persons during pandemic~\cite{shental2020efficient}. Non-adaptive design has also been applied in other areas such as computational and molecular biology~\cite{du2000combinatorial}, networking~\cite{d2019separable}. However, the number of tests in non-adaptive design normally does not meet the information-theoretic bound that multi-stage designs can obtain. In multi-stage design, the design of tests in a stage depends on the design of the tests in the previous designs and therefore time consuming is essentially unavoidable. The second criterion is between deterministic and randomized design. Given the same input, a deterministic design produces the same result while a randomized one might not produce the same result because it includes a degree of randomness in it. The third criterion is the setting of the defective set. The combinatorial setting applies any configuration on the defective set under the predefined constrains while the probabilistic setting applies some distribution on the input items. Finally, both exact recovery and approximate recovery are also an important criterion studied in group testing.

In this work, we consider non-adaptive and 2-stage designs, deterministic design, combinatorial setting, and exact recovery. The formal problem definition is presented in the subsequent section.

\subsection{Problem definition}
\label{sub:pre:probDef}

We index the population of $n$ items from $1$ to $n$. Let $[n] = \{1, 2, \ldots, n \}$ and $P$ be the defective set with $|P| = d$. A test is defined by a subset of items $S \subseteq [n]$. A pool with a negative (positive) outcome is called a negative (positive) pool. The outcome of a test on a subset of items is positive if the subset contains at least $u$ defective items and is negative otherwise. Mathematically, the test outcome, denoted as $\test(S)$, is positive if $|S \cap P| \geq u$ and negative if $|S \cap P| < u$. This model is denoted as $(n, d, u)$-TGT. Note that $|P|$ can not be smaller than $u$ because all tests yield negative otherwise.

To represent tests in a stage, measurement matrices are often used. A $t \times n$ binary matrix $\cT =(t_{ij}) \in \{0, 1 \}^{t \times n}$ is defined as a measurement matrix, where $n$ is the number of items and $t$ is the number of tests, as follows. The $j$th item corresponds to the $j$th column of the matrix. An entry $t_{ij} = 1$ naturally means that item $j$ belongs to test $i$, and $t_{ij} = 0$ means otherwise. The outcome of all tests is $\bY = (y_1, \ldots, y_t)^T$, where $y_i=1$ if test $i$ is positive and $y_i = 0$ otherwise. An input vector $\bX = (x_1, \ldots, x_n) \in \{0, 1 \}^n$ represents $n$ items in which $x_j = 0$ if item $j$ is defective and $x_j = 0$ otherwise for $j \in [n]$. Let $\supp(\bV) = \{j \mid v_j \neq 0 \}$ be the support set for vector $\bV = (v_1, \ldots, v_w)$. The AND operator between two vectors of same size $\bZ = (z_1, \ldots, z_n)$ and $\bZ^\prime = (z_1^\prime, \ldots, z_n^\prime)$ is $\bZ \wedge \bZ^\prime = (z_1 \wedge z_1^\prime, \ldots, z_n \wedge z_n^\prime)$. Then the outcome vector $\bY$ is given by

\begin{alignat*}{3}
\bY := \begin{bmatrix}
y_1 \\
\vdots \\
y_t
\end{bmatrix} &:= &\cT \otimes \bX &:= \begin{bmatrix}
\cT(1, :) \otimes \bX \\
\vdots \\
\cT(t, :) \otimes \bX
\end{bmatrix} \\
&:= &\test(\cT, \supp(\bX)) &:= \begin{bmatrix}
\test(\cT(1, :) \wedge \bX) \\
\vdots \\
\test(\cT(t, :) \wedge \bX)
\end{bmatrix}, \label{eqn:thresholdGT}
\end{alignat*}
where $\otimes$ and $\test(\cdot)$ are notations for the test operations in non-adaptive $(n, u)$-TGT; namely, $y_i := \cT(i, :) \otimes \bX := \test(\cT(i, :) \wedge \bX) = 1$ if $|\supp(\cT(1, :)) \cap \supp(\bX) \cap P| \geq u$ and $y_i := \cT(i, :) \otimes \bX := \test(\cT(i, :) \wedge \bX) = 0$ if $|\supp(\cT_{i, *}) \cap \supp(\bX) \cap P| < u$, for $i = 1, \ldots, t$. When $u = 1$, $(n, u)$-TGT reduces to standard group testing and we use $\odot$ instead of $\otimes$.

The procedure to get outcome vector $\bY$ is called \textit{encoding} and the procedure to recover $\bX$ from $\bY$ and $\cT$ is called \textit{decoding}. Our objective is to minimize $t$ and lower the time for exactly recovering $P = \supp(\bX)$ from $\bY$ by using $\cT$ with non-adaptive or 2-stage designs in $(n, d, u)$-TGT.

\subsection{Contributions}
\label{sub:intro:contri}

To the best of our knowledge, in the combinatorial setting and the deterministic design, this is the first work that solves the threshold group testing without gap problem with decoding complexity that scales in $u, d$, and $\log{n}$. When $u$ is a constant, we present a non-adaptive design in which the number of tests and the decoding time are $O(d^3 (\log^2{n}) \log{(n/d)} )$ and $O\big(d^3 (\log^2{n}) \log{(n/d)} + d^2 (\log{n}) \log^3{(n/d)} \big)$, respectively. Let $\min(g, h) = \min\{ \log{g}, (\log_h{g})^2 \}$ and $\min^*(u, d) = \min\left\{ u^u, (d - u)^{d - u} \right\}$. When $u$ is not a constant, the number of tests and the decoding time are $O \left( d(d - u)^2 u^2 \log{\frac{n}{d}} \times \min^*(u, d) \min(n, d - u + 1) \min(n, u) \right)$ and $O\big( d(d - u) u^2 \log{\frac{n}{d}} \min^*(u, d) $ $\min(n, u) \big( (d - u) \min(n, d - u + 1) + \log^2{\frac{n}{d - u}} \big) \big)$, respectively. These results are summarized in the following theorem.

\begin{theorem}
Consider $(n, d, u)$-TGT, i.e., TGT without gap and the number of defective items is $d$. There exists a non-adaptive design that can recover the defective set with $O \left( d(d - u)^2 \log{\frac{n}{d}} \min(n, d - u + 1) \min(n, u) \min^*(u, d) \right)$ tests in $O \big( d(d - u) u^2 \log{\frac{n}{d}} \min(n, u) \min^*(u, d) \big( (d - u) \min(n, d - u + 1) + \log^2{\frac{n}{d - u}} \big) \big)$ time, where $\min(g, h) = \min\{ \log{g}, (\log_h{g})^2 \}$ and $\min^*(u, d) = \min\left\{ u^u, (d - u)^{d - u} \right\}$.
\label{thm:withoutGap:nonadaptive}
\end{theorem}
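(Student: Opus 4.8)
The plan is to build the non-adaptive scheme in two conceptual layers, mirroring the structure that the complexity bounds suggest. The outer layer is a \emph{dimension-reduction / localization} step that isolates individual defectives (or small clusters of defectives) into disjoint ``buckets'' using a $(d,\cdot)$-type list-disjunct or selector matrix; this is where the $\log(n/d)$ factors and one factor of $d$ come from. The inner layer is a \emph{threshold-decoding gadget} that, once we have restricted attention to a sub-population guaranteed to contain at most a controlled number of defectives, actually certifies which items are defective despite the fact that a single test only fires when $u$ defectives are jointly present. The key observation I would exploit is the symmetry $u \leftrightarrow d-u$: a test on $S$ is positive iff $|S\cap P|\ge u$, equivalently iff the complement of $S$ (within a known superset of $P$) misses at most $d-u$ defectives; this is exactly why $\min^*(u,d)=\min\{u^u,(d-u)^{d-u}\}$ appears, and I would run whichever of the two ``sides'' is cheaper.

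First I would formalize the localization step: take a $(d, \text{something})$-list-disjunct matrix on $[n]$ with $O(d \log(n/d))$ rows (or the explicit strongly-selective/$d$-disjunct-style construction the paper presumably cites), so that after its tests we may pass to $O(d)$ disjoint sub-instances, each on $O(n/d)$ items and each containing a bounded number of defectives. Here threshold tests are weaker than standard tests, so the localization matrix must be ``tensored'' or composed with a small fixed block of threshold tests that simulates a standard OR test on each row — concretely, by adding $u-1$ known ``helper'' defectives is impossible (we don't have any), so instead one detects whether a row contains $\ge u$ defectives versus fewer, and iterates/refines. I expect the clean way to state this is via a cascade: a first family of tests that determines, for a nested sequence of partitions, the number-of-defectives profile coarsely, feeding a second family that does the fine separation. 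The $\min(n,u)$ and $\min(n,d-u+1)$ factors come from these nested partitions having depth $\min\{\log(\cdot), (\log_{\cdot}(\cdot))^2\}$.

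Next, for the inner threshold-decoding gadget on a bucket $B$ known to contain at most $k$ defectives (with $k$ a function of $u$ and $d$), I would argue as follows. Within $B$, superimpose a small $(\le k, u)$-threshold-resolvable family: a collection of $O(k^2 \cdot \min^*(u,d) \cdot (\text{poly in }\log))$ tests, each of which is the AND of a ``test'' column pattern with a fixed set of $u-1$ or $d-u$ ``context'' columns drawn from $B$ so that a single unknown defective flips the outcome. The combinatorial heart is: design these context sets so that for \emph{every} candidate item $j \in B$ and every admissible defective configuration, there is a test whose outcome changes exactly according to whether $j\in P$. Choosing the context sets to realize all $u$-subsets (cost $\binom{k}{u-1}\le u^u$-type) or dually all $(d-u)$-subsets (cost $(d-u)^{d-u}$-type) gives the $\min^*(u,d)$ factor; the extra $(d-u)^2 u^2$ and $d$ factors are the union over buckets and over the two roles. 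The decoding is then a direct sweep: for each bucket, for each item, read off the relevant test outcomes and decide membership, which is where the additive $d^2(\cdots)\log^2(n/(d-u))$ term in the running time originates (re-running the localization-style decode inside each bucket).

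The main obstacle, and the step I would spend the most care on, is the \emph{interface between the two layers}: guaranteeing that the localization step delivers buckets with a defective-count bound tight enough that the inner gadget's $u^u$ / $(d-u)^{d-u}$ cost is affordable, while simultaneously the threshold ($\ge u$) nature of the tests does not destroy the standard list-disjunct guarantees in the outer step. Concretely I would need a lemma of the form ``a $\Theta(d)$-list-disjunct matrix, read through a threshold-$u$ channel with $u-1$ auxiliary copies of each column, still localizes every defective to a unique row up to an $O(d)$-size false-positive list,'' and the proof that the auxiliary copies can be packed without inflating the test count by more than the claimed factors. Once that interface lemma is in hand, the theorem follows by composing the two layers, bounding $\binom{d}{u}\le (d/u)^u(d/(d-u))^{d-u}$ versus the improved $\min^*(u,d)$, and collecting the resulting products of $d$, $\log n$, $\log(n/d)$, $\min(n,u)$, and $\min(n,d-u+1)$ exactly as in the statement.
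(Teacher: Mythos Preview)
Your proposal has a genuine gap: the architecture you sketch is not the one the paper uses, and the key mechanism is missing. The paper does \emph{not} localize into buckets and then run a per-bucket threshold gadget. Instead, the entire construction hinges on the auxiliary-set idea you explicitly dismissed. The trick is that you do not need to \emph{know} $u-1$ helper defectives; you only need a row vector $\bS$ whose support intersects $P$ in exactly $u-1$ items. Given such an $\bS$, for any matrix $\cM$ the outcome vector $\test(\cM \vee \bS)$ equals $\cM \odot \vecc(P\setminus\supp(\bS))$, i.e.\ ordinary OR-group-testing on the remaining $d-u+1$ defectives (Lemma~\ref{lem:CGTwithSet}). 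So the whole proof reduces to manufacturing, non-adaptively, a small family of rows guaranteed to contain $u$ vectors $\bS_1,\ldots,\bS_u$ each meeting $P$ in $u-1$ items and with the $u$ intersections pairwise distinct.

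The paper produces these rows in two steps. First, a new object called an $(u,d,n)$-\emph{single selector} $\cA$: a matrix with $O(\min^*(u,d)\cdot d\log(n/d))$ rows such that for every $d$-subset $P$ some row meets $P$ in exactly $u$ items. This lemma, proved by a Bernoulli random-matrix argument with $p=1/d$ (or $p=1-1/d$, giving the $u\leftrightarrow d-u$ symmetry), is the sole source of the $\min^*(u,d)$ factor. Second, a $u$-disjunct matrix $\cB$ is composed with $\cA$ via an \emph{exclusion} operator $\cB\circ\cA$ so that, from the special row of $\cA$ hitting exactly $u$ defectives, each of the $u$ defectives can be individually removed, yielding the desired $u$ rows with $u-1$ defectives each. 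Finally $\cM\vee(\cB\circ\cA)$ with $\cM$ a $(d-u+1)$-disjunct matrix runs standard group testing in parallel over all these candidate auxiliary rows. The factors $\min(n,u)$ and $\min(n,d-u+1)$ come simply from the sizes of $\cB$ and $\cM$ (Theorem~\ref{thm:disjunct}), not from any nested-partition depth. Your bucket-plus-context-set picture, and the interface lemma you propose, do not correspond to any step here; in particular, no list-disjunct localization is used in the general-$u$ proof.
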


For a special case when $u = 2$, with non-adaptive design, the number of tests and the decoding time are $O(d^3 (\log{n}) \log{(n/d)} )$ and $O(d^2 (\log{n} + \log^2{(n/d)}) )$, respectively. Moreover, with 2-stage design, the number of tests and the decoding time are $O(d^2 \log^2{(n/d)} )$. These results are presented in the theorem below.

\begin{theorem}
Consider $(n, d, \ell = 1, u = 2)$-TGT, i.e., TGT without gap and the number of defective items is $d$. There exists a non-adaptive design that can recover the defective set with $O\big( d^3 \log{\frac{n}{d}} \cdot \min(n, d) \big)$ tests in $O\big( d^2 \big(\log^2{\frac{n}{d}} + \min(n, d) \big) \big)$ time. For the 2-stage design, there exists a deterministic procedure that recovers the defective set with $O\big( d^2 \log^2{\frac{n}{d}} \big)$ tests in $O\big( d^2 \log^2{\frac{n}{d}} \big)$ time.
\label{thm:u2Det}
\end{theorem}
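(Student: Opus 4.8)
The plan is to reduce $(n,d,2)$-TGT to a constant number of instances of \emph{standard} group testing, via the following pivot observation: if $p_0$ is a known defective item, then for every test $S$ with $p_0\in S$ we have $\test(S)=1$ iff $S\setminus\{p_0\}$ contains a defective. Hence adjoining $p_0$ to every column of a strongly explicit $(d-1)$-disjunct (or list-disjunct) matrix $M$ turns the threshold-$2$ outcome vector into an ordinary group-testing outcome vector for the remaining $d-1$ defectives, which can be decoded by the usual disjunct-matrix decoder. The entire difficulty is therefore to (i) produce, non-adaptively, a small pool of ``pivot candidates'' that is guaranteed to contain a genuine defective that can be \emph{certified} from the outcomes, and (ii) shrink the universe to $\poly(d)$ items (up to polylogarithmic factors) so that the subsequent standard decoding is fast.

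For the non-adaptive design I would layer three ingredients. First, a \emph{universe-reduction} layer: a cover-free / list-disjunct family on $[n]$ with $O(d\log\tfrac{n}{d})$ groups such that, for the true defective set $P$ with $|P|=d$, every $p\in P$ is isolated from the rest of $P$ in some group; threshold-$2$ tests on suitable unions of these groups (each group taken together with a group already certified to meet $P$ in $\ge 2$ items, to force a signal) let us detect that two defectives have been separated and designate an $O(d^2\log\tfrac{n}{d})$-size family of pivot candidates and a reduced universe $G$ of size $\poly(d)$ times polylogarithmic factors; the $\min(n,d)$ factor of the statement enters here as the cost of the Kautz--Singleton realisation of the cover-free family. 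Second, for each pivot candidate $j$, the ``pivot copy'' $\cT^{(j)}$ obtained from a $(d-1)$-disjunct matrix on $G$ by forcing column $j$ to all-ones; summing over the $O(d^2\log\tfrac{n}{d})$ candidates gives $O\!\big(d^3\log\tfrac{n}{d}\cdot\min(n,d)\big)$ tests in total. Third, decoding: read the reduction layer to obtain a certified defective $p_0$ and the set $G$, then run the standard disjunct-matrix decoder on $\cT^{(p_0)}$; the running time is dominated by scanning the $O(d^2\log\tfrac{n}{d})$ outcomes of the core scheme and the standard decode over $G$, for $O\!\big(d^2(\log^2\tfrac{n}{d}+\min(n,d))\big)$.

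For the $2$-stage design the pivot idea is far cheaper, since the stage-$1$ tests may be chosen after the stage-$0$ ones. In stage~$1$ I would use a $d$-disjunct Kautz--Singleton matrix of size $O\!\big(d^2\log^2\tfrac{n}{d}\big)$ to simultaneously (a) locate one defective $p_0$ in binary-search fashion --- the threshold-$2$ signal being recovered by always co-testing with one fixed group certified to meet $P$ in $\ge 2$ items --- and (b) cut the universe down to a candidate superset $G$ with $|G|=O(d\log\tfrac{n}{d})$, using the standard fact that a $d$-disjunct matrix non-adaptively returns the positives plus $O(d)$ false positives, here adapted to the threshold setting through the fixed pivot group. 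In stage~$2$ we adjoin $p_0$ to a fresh $d$-disjunct matrix on $G$ and decode by ordinary group testing; since $|G|=O(d\log\tfrac{n}{d})$ this costs $O\!\big(d^2\log^2\tfrac{n}{d}\big)$ tests and $O\!\big(d^2\log^2\tfrac{n}{d}\big)$ decoding time, so both stages together stay within the claimed bound.

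The main obstacle, common to both designs, is that when $u=2$ a test that touches at most one defective is \emph{completely uninformative}, so the familiar ``test a small block and see whether it is positive'' paradigm fails outright: one must always arrange for two defectives to be co-located in a controlled way before any single bit of information can be extracted. Concretely, the technical heart is to show that the universe-reduction layer is (i) \emph{consistent} --- the designated pivot candidates really do contain a certifiable defective whenever $|P|=d$ --- and (ii) \emph{thrifty} --- the reduced candidate set $G$ has size only $\poly(d)$ times polylogarithmic factors, which is what the stated decoding time needs; this is a union-bound / cover-free-family counting argument, and getting the exponent of $d$ and the $\min(n,d)$ factor tight (rather than $d^4$ or $\log^2 n$) is where the effort concentrates. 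Indeed, specialising the construction of Theorem~\ref{thm:withoutGap:nonadaptive} to $u=2$ already yields $O\!\big(d^3\log\tfrac{n}{d}\cdot\log n\cdot\min(n,d)\big)$ tests; replacing the last $\log n$ by $\min(n,d)$ (equivalently, $\log\tfrac{n}{d}$ together with folding the other factor) is the extra ingredient specific to the $u=2$ case.
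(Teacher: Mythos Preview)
Your proposal diverges from the paper's proof in a structurally significant way, and the divergence exposes a genuine gap.

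The paper does \emph{not} try to pin down a single certified defective item $p_0$. Instead it builds a $(2d,d{+}2,n)$-selector $\cA$ with $a=O(d\log(n/d))$ rows, tests every single row $\cA(i,:)$, every pair $\cA(i,:)\vee\cA(j,:)$ (these $a^2$ tests are your ``reference vector''), and $\cM\vee\cA(i,:)$ for a fixed $d$-disjunct $\cM$ over the full universe $[n]$. The selector property guarantees two rows $i_1,i_2$ with $\test(\cA(i_1,:))=\test(\cA(i_2,:))=0$ but $\test(\cA(i_1,:)\vee\cA(i_2,:))=1$; this certifies that each row meets $P$ in exactly one \emph{distinct} defective, without ever learning which item that defective is. Lemma~\ref{lem:CGTwithSet} then gives $\dec(\test(\cM\vee\cA(i_k,:)),\cM)=P\setminus\{p_k\}$ for $k=1,2$, and the union of the two decodes is all of $P$. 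The crucial point is that the auxiliary objects are \emph{rows of $\cA$}, fixed in advance and enumerated blindly; nothing in the measurement matrix depends on outcome-derived information.

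Your plan, by contrast, hinges on producing a \emph{specific item} $p_0$ (``forcing column $j$ to all-ones'', ``adjoin $p_0$ to a fresh $d$-disjunct matrix on $G$'') and a reduced universe $G$, and then building disjunct matrices over $G$. In a non-adaptive design neither $p_0$ nor $G$ is available when the tests are written down, so ``a $(d{-}1)$-disjunct matrix on $G$'' is not a legal object here; and even in the $2$-stage design, your first stage relies on ``one fixed group certified to meet $P$ in $\ge 2$ items'' to bootstrap the binary search, but that certification is itself an outcome-dependent fact. Moreover, even granting a certified $p_0$, the pivot trick returns only $P\setminus\{p_0\}$; you then need $p_0$ as a named item, which is exactly what threshold-$2$ testing makes hard. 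The paper's two-pivot union $(P\setminus\{p_1\})\cup(P\setminus\{p_2\})=P$ sidesteps this entirely. Finally, your test arithmetic does not close: $O(d^2\log(n/d))$ pivot candidates times an $O(d^2\min(n,d))$-row disjunct matrix on $[n]$ gives $O(d^4\log(n/d)\min(n,d))$, not the claimed $O(d^3\log(n/d)\min(n,d))$; the paper gets the cubic exponent because it has only $a=O(d\log(n/d))$ auxiliary rows, not $O(d^2\log(n/d))$ candidate items.
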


Consider TGT without gap and with $|P| \leq d$. For non-adaptive design, D'yachkov et al.~\cite{d2013superimposed} and Cheraghchi~\cite{cheraghchi2013improved} shows that it is possible to recover $P$ with $O( d^2 \log{n})$ tests when $u$ is constant. Chen et al.~\cite{chen2009nonadaptive} propose a design with $t = O \big( \big( \frac{d}{u} \big)^u \big( \frac{d}{d-u} \big)^{d - u} d \log{\frac{n}{d}} \big)$ tests to recover $P$ in $O(t u n^u)$ time, where $\mathrm{e}$ is the natural logarithm base. Although the number of tests is exponential to $d$, the decoding time is exponential to $d$ and linearly scales to $n^u$, this is the first work that tackle TGT under the combinatorial setting, the deterministic design, and the exact recovery criterion. Bui et al.~\cite{bui2019efficiently} later reduce the decoding complexity to $t \poly(d, \log{n})$. The numbers of tests in these works are much larger than in our proposed designs. A comparison of our proposed designs with previous works is summarized in Table~\ref{tbl:cmp}.

\begin{table*}[t]
\begin{center}
\scalebox{0.9}{
\begin{tabular}{|c|c|c|l|l|c|}
\hline
\begin{tabular}[c]{@{}c@{}} No. of\\ defectives \end{tabular} & \begin{tabular}[c]{@{}c@{}} No. of\\ stages \end{tabular} & \begin{tabular}[c]{@{}c@{}}Threshold\\ ($u$)\end{tabular} & \multicolumn{1}{c|}{Scheme} & \multicolumn{1}{c|}{\begin{tabular}[c]{@{}c@{}}No. of tests\\ ($t$)\end{tabular}} & \multicolumn{1}{c|}{\begin{tabular}[c]{@{}c@{}}Decoding complexity\end{tabular}} \\ \hline
\multirow{4}{*}{$d$} & \multirow{3}{*}{1} & \multirow{3}{*}{$\geq 2$} & De Marco et al.~\cite{de2020subquadratic} & $O \left( d^{3/2} \cdot \sqrt{\frac{d}{u}} \log{\frac{n}{d}} \right)$ & Not considered \\
\cline{4-6} & & & \textbf{Theorem~\ref{thm:withoutGap:nonadaptive}} & \begin{tabular}[c]{@{}c@{}} $O \left( d(d - u)^2 u^2 \log{\frac{n}{d}} \times \min(n, d - u + 1) \right.$ \\$\left. \times \min(n, u) \min\left\{ u^u, (d - u)^{d - u} \right\} \right)$ \end{tabular} & \begin{tabular}[c]{@{}c@{}} $O\big( d(d - u) u^2 \log{\frac{n}{d}} \min(n, u) \min\left\{ u^u, (d - u)^{d - u} \right\}$ \\$\times \big( (d - u) \min(n, d - u + 1) + \log^2{\frac{n}{d - u}} \big) \big)$ \end{tabular} \\
\cline{3-6} & & \multirow{2}{*}{2} & \textbf{Theorem~\ref{thm:u2Det}} & $O\big( d^3 \log{\frac{n}{d}} \cdot \min(n, d) \big)$ & $O\big( d^2 \big(\log^2{\frac{n}{d}} + \min(n, d) \big) \big)$ \\
\cline{2-2} \cline{4-6} & 2 & & \textbf{Theorem~\ref{thm:u2Det}} & $O\big( d^2 \log^2{\frac{n}{d}} \big)$ & $O\big( d^2 \log^2{\frac{n}{d}} \big)$ \\
\hline
\multirow{4}{*}{$\leq d$} & \multirow{4}{*}{1} & \multirow{4}{*}{$\geq 2$} & D'yachkov et al.~\cite{d2013superimposed} & $O \left( d^2 \log{n} \cdot \frac{(u-1)! 4^u}{(u - 2)^u (\ln{2})^u}\right)$ & Not considered \\
\cline{4-6} & & & Cheraghchi~\cite{cheraghchi2013improved} & $O((8u)^u d^2 \log{d} \cdot \log{\frac{n}{d}})$ & Not considered \\
\cline{4-6} & & & Chen et al.~\cite{chen2009nonadaptive} & $O \left( \left( \frac{d}{u} \right)^u \left( \frac{d}{d-u} \right)^{d - u} d \log{\frac{n}{d}} \right)$ & $O(t u n^u)$ \\
\cline{4-6} & & & Bui et al.~\cite{bui2019efficiently} & $O \left(  \left( \frac{d}{u} \right)^u \left( \frac{d}{d -u} \right)^{d - u} d^3 \log{n} \cdot \log{\frac{n}{d}} \right)$ & $t \times \poly(d, \log{n})$ \\
\hline
\end{tabular}}
\end{center}

\caption{Comparison with existing work under the deterministic design, the combinatorial setting, and the exact recovery setting. Here, we denote $\min(g, h) = \min\{ \log{g}, (\log_h{g})^2 \}$.}
\label{tbl:cmp}
\end{table*}

\subsection{Related work}
\label{sec:intro:related}

Consider a population of $n$ items and a set of up to $d$ defective items, i.e., $|P| \leq d$. In SGT, it is possible to obtain the information-theoretic number of tests $O(d\log{(n/d)})$ and identify all defective items by using 2-stage designs~\cite{de2005optimal}. In non-adaptive design, although the lower bound of the number of tests is $O(d^2 \log_d{n})$~\cite{d1982bounds}, the best practice for attaining explicit construction is $O(d^2 \log{n})$~\cite{porat2011explicit}. There is a body of literatures working on attaining $O(d^2 \log{n})$ tests and decoding time of $\poly(d, \log{n})$ such as~\cite{indyk2010efficiently,ngo2011efficiently,cheraghchi2020combinatorial,cai2017efficient,bondorf2020sublinear}.% For further reading, we recommend readers to refer to a survey from the combinatorial view of point~\cite{du2000combinatorial} or a recent survey from the information theory perspective~\cite{aldridge2019group}.

Consider TGT without gap. When $|P| = d$, De Marco et al.~\cite{de2020subquadratic} propose a scheme with no decoding procedure that requires $O \left( d^{3/2} \cdot \sqrt{\frac{d}{u}} \log{\frac{n}{d}} \right)$ tests. There are some works~\cite{chan2013stochastic,reisizadeh2018sub} that consider the case $|P| = d$ under the probabilistic setting with some models applied on the gap $g$ and achieve lower numbers of tests compared to ours. However, either their decoding complexity scales linearly to the number of items~\cite{chan2013stochastic} or an external look-up matrix of size $O(u \log{n}) \times \binom{n}{u}$ is required~\cite{reisizadeh2018sub}.

\section{Preliminaries}
\label{sec:pre}

\subsection{Notations}
\label{sub:pre:notation}

For consistency, we use capital calligraphic letters for matrices, non-capital letters for scalars, bold letters for vectors, and capital letters for sets. All matrix and vector entries are binary. The frequently used notations are listed in Table~\ref{buice.t2}.

\begin{table}[t]
\begin{center}
\scalebox{1.0}{
\begin{tabular}{|c|l|}
\hline
\begin{tabular}{@{}c@{}} Notation \end{tabular} & \begin{tabular}{@{}c@{}} Description \end{tabular} \\
\hline
$n$ & Number of items \\
\hline
$d$ & Number of defective items \\
\hline
$\bX = (x_1, \ldots, x_n)^T$ & Binary representation of $n$ items \\
\hline
$\supp(\bV) = \{j \mid v_j \neq 0 \}$ & The support set for vector $\bV = (v_1, \ldots, v_w)$ \\
\hline
$\vecc(A)$ & A $1 \times n$ support vector for set $A$ such that $\supp(\vecc(A)) = A$. \\
\hline
$u$ & Threshold in non-adaptive $(n, u)$-TGT model  \\
\hline
$P = \{p_1, p_2, \ldots, p_{|P|} \}$ & \begin{tabular}{@{}l@{}} Set of defective items; Cardinality of $P$ is $|P| = d$ \end{tabular}  \\
\hline
$N = [n] = \{1, \ldots, n \}$ & Set of $n$ items \\
\hline
$\otimes_u$ & \begin{tabular}{@{}l@{}} Operation related to non-adaptive $(n, u)$-TGT (to be defined later) \end{tabular}  \\
\hline
$\cT(i, :), \cM(i, :)$ & Row $i$ of matrix $\cT$, row $i$ of matrix $\cM$ \\
\hline
$\cT(:, j), \cM(:, j)$ & Column $j$ of matrix $\cT$, column $j$ of matrix $\cM$ \\
\hline
$\min(g, h)$ & $\min\{ \log{g}, (\log_h{g})^2 \}$ \\
\hline
$\mathrm{e}$ & The natural logarithm base. \\
\hline
\end{tabular}}

\end{center}
\caption{Notations frequently used in this paper.}

\label{buice.t2}
\end{table}

\subsection{Disjunct and list-disjunct matrices}
\label{sub:pre:disjunct}

Disjunct matrices are a powerful tool to tackle the standard and threshold group testing problem~\cite{chen2009nonadaptive,cheraghchi2013improved,bui2019efficiently}. They were first introduced by Kautz and Singleton~\cite{kautz1964nonrandom} and then generalized by Stinson and Wei~\cite{stinson2004generalized} and D'yachkov et al.~\cite{d2002families}. The formal definition of a disjunct matrix is as follows.

\begin{definition}
An $m \times n$ binary matrix $\cM$ is called a $k$-disjunct matrix if, for any two disjoint subsets $S_1, S_2 \subset [n]$ such that $|S_1| = k$ and $|S_2| = 1$, there exists at least one row in which there are all 1's among the columns in $S_2$ while all the columns in $S_1$ have 0's, i.e., $\left\vert \bigcap_{j \in S_2} \supp \left( \cM(:, j) \right) \big\backslash \bigcup_{j \in S_1} \supp \left( \cM(:, j) \right) \right\vert \geq 1$.
\label{def:threshDisjunct}
\end{definition}

When $\cM$ is a $d$-disjunct matrix and $|\bX| \leq d$, there exists an efficient procedure to recover $\supp(\bX)$ from $\cM \odot \bX$ as follows.

\begin{theorem}~\cite[Theorem 3.3]{cheraghchi2020combinatorial}
There exists a Monte-Carlo construction of a $k$-disjunct matrix $\cM \in \{0, 1 \}^{m \times n}$ with $m = O(k^2 \min\{\log{n}, (\log_k{n})^2 \})$, that allows decoding in $O(m + k\log^2{(n/k)} )$ time.
\label{thm:disjunct}
\end{theorem}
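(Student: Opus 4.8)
The plan is to derive the construction from a standard randomized build of a $k$-disjunct matrix combined with a code-concatenation trick, and to attach a list-decoding-style recovery procedure that runs in near-linear time. First I would recall the classical probabilistic argument: pick each entry of an $m \times n$ matrix to be $1$ independently with probability $p = \Theta(1/k)$; then for any fixed column $j$ and any fixed set $S$ of $k$ other columns, the probability that no row ``isolates'' $j$ from $S$ is $(1 - p(1-p)^k)^m \le e^{-\Omega(m/k)}$. A union bound over the $n \binom{n-1}{k}$ choices shows that $m = O(k^2 \log(n/k))$ rows suffice with positive (indeed overwhelming) probability, giving the Monte-Carlo guarantee. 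This already yields the $(\log_k n)^2$ regime is unnecessary here, so to get the claimed $m = O\big(k^2 \min\{\log n, (\log_k n)^2\}\big)$ I would instead invoke the two known constructions and take whichever is smaller: the direct random construction for the $\log n$ term, and a Kautz--Singleton-type concatenation of a Reed--Solomon code with a random inner code (as in the combinatorial-group-testing literature) for the $(\log_k n)^2$ term. The minimum of the two sizes is what the statement records.

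Next I would address decoding. The naive decoder for a $k$-disjunct matrix marks item $j$ as non-defective whenever some test containing $j$ is negative, and declares everything else defective; this is correct but costs $\Omega(mn)$ or at least $\Omega(n)$ time because it touches every column. To reach $O(m + k\log^2(n/k))$ I would use the structured construction: organize the columns so that the RS-code coordinates encode the index $j \in [n]$ digit by digit, and observe that the negative tests, read through the inner code, directly eliminate all but $O(k)$ candidate ``symbols'' in each of the $O(\log_k n)$ coordinate positions; a list-recovery step on the outer RS code then reconstructs the at most $k$ surviving indices. Charging $O(\log^2(n/k))$ per recovered defective and $O(m)$ to read the syndrome once gives the stated bound. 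The main obstacle I expect is making the decoding bound tight simultaneously with the $\min\{\log n, (\log_k n)^2\}$ test count: the fast list-recovery decoder naturally lives on the concatenated construction (the $(\log_k n)^2$ side), so one must check that in the regime where the plain $\log n$ construction is smaller, a correspondingly fast decoder still exists — which it does, since there $m = O(k^2 \log n)$ and even the $O(m)$-time syndrome pass plus a bucketed elimination over $O(\log(n/k))$ levels stays within $O(m + k\log^2(n/k))$. I would therefore present the proof as: (i) cite/recall the two constructions and take the minimum; (ii) describe the hashed/leveled column layout; (iii) give the elimination-plus-list-recovery decoder and bound its running time by the two terms separately, concluding the claim.
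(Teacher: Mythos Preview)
The paper does not prove this statement at all: Theorem~\ref{thm:disjunct} is quoted verbatim as \cite[Theorem~3.3]{cheraghchi2020combinatorial} and used purely as a black box in the complexity analyses of Sections~\ref{sec:withoutGap} and~\ref{sec:efficient}. There is therefore no ``paper's own proof'' to compare against; the authors simply import the result.

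Your proposal is a reasonable high-level reconstruction of the argument in the cited Cheraghchi--Nakos paper: the $O(k^2\log n)$ bound via the Bernoulli random matrix, the $O(k^2(\log_k n)^2)$ bound via a Kautz--Singleton/Reed--Solomon concatenation, and the sublinear decoder via list recovery on the outer code. That is indeed the shape of their construction, so as a sketch it is on the right track. One remark: in your first paragraph you write that the $(\log_k n)^2$ regime ``is unnecessary here'' and then immediately invoke it anyway; you should clean that up. Also, the actual fast decoder in \cite{cheraghchi2020combinatorial} is somewhat more delicate than ``bucketed elimination'' --- it relies on explicit list-recoverable codes with specific parameters to get the $k\log^2(n/k)$ term uniformly in both regimes --- so if you were genuinely reproving the result you would need to be more careful there. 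But for the purposes of \emph{this} paper, no proof is expected: the theorem is an external tool, and your effort, while not wrong in spirit, goes beyond what the manuscript requires.
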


List-disjunct matrices are a relaxation of disjunct matrices~\cite{dyachkov1983survey,rashad1990random,de2005optimal,indyk2010efficiently,cheraghchi2013noise}. In particular, once we set $|S_2| = \ell \geq 1$ in Definition~\ref{def:threshDisjunct}, matrix $\cM$ is called a $(k, \ell)$-list disjunct matrix. Given an input vector $\bX$ with $P = \supp(\bX)$ and the test outcome vector $\cM \odot \bX$, there exists a decoding procedure that returns set $L$ such that $L \cap P = P$ and $|L| \leq k + \ell - 1$. Cheraghchi and Nakos~\cite{cheraghchi2020combinatorial} present an efficient encoding and decoding scheme for list disjunct matrices as follows.

\begin{theorem}~\cite[Theorem 3.2]{cheraghchi2020combinatorial}
There exists a Monte-Carlo construction of a $(k, k)$-list disjunct matrix $\cM \in \{0, 1 \}^{m \times n}$ with $m = O(k \log{(n/k)})$, that allows decoding in $O(k\log^2{(n/k)} )$ time.
\label{thm:listDisjunct}
\end{theorem}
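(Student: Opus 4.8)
The plan is to establish two things separately: the combinatorial $(k,k)$-list-disjunct property of $\cM$ with $m = O(k\log(n/k))$ rows, and a decoder that produces the candidate list in $O(k\log^2(n/k))$ time. Recall that once $\cM$ is $(k,k)$-list disjunct, the natural decoder outputs $L = \{ j \mid \supp(\cM(:,j)) \subseteq \supp(\cM \odot \bX) \}$, and the defining separation property forces $\supp(\bX) \subseteq L$ and $|L| \le 2k-1$. So the task reduces to (i) proving existence of a matrix with the separation property at the right test count and (ii) computing $L$ without scanning all $n$ columns.

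For existence I would use the probabilistic method with a column-sparse random design: include each item in each test independently with probability $p = \Theta(1/k)$. For a fixed disjoint pair $(S_1, S_2)$ with $|S_1| = |S_2| = k$, a single row ``separates'' the pair (all $S_1$-columns $0$, some $S_2$-column $1$) with probability $(1-p)^{k}\big(1 - (1-p)^{k}\big) = \Theta(1)$, since $(1-p)^k = \Theta(1)$ when $p = \Theta(1/k)$. Over $m$ independent rows the pair stays unseparated with probability at most $e^{-\Omega(m)}$, and a union bound over the at most $\binom{n}{k}^2 \le 2^{O(k\log(n/k))}$ bad pairs shows that $m = O(k\log(n/k))$ rows drive the total failure probability below $1$ (indeed to $o(1)$). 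This is exactly a Monte-Carlo guarantee: the construction succeeds with high probability but is not certified, matching the statement.

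The optimal test count comes for free from this counting, but a plain random $\cM$ only admits the linear-time decoder (scan each of the $n$ columns and test containment), costing $O(mn)$ rather than the promised sublinear bound. To obtain $O(k\log^2(n/k))$ I would impose a layered structure on $\cM$: group the rows into $O(\log(n/k))$ rounds, each round hashing the universe into $O(k)$ buckets and attaching a short index gadget so that a bucket containing a single surviving candidate reveals that candidate directly from its outcome pattern rather than by enumeration. The separation property guarantees every member of $L$ is isolated in some round, so sweeping the $O(k)$ buckets over $O(\log(n/k))$ rounds and identifying each isolated survivor yields $L$; with per-survivor cost $O(\log(n/k))$ and $|L| = O(k)$ this gives the claimed $O(k\log^2(n/k))$.

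The main obstacle is reconciling the two requirements at once. A naive isolate-and-identify gadget spends $\Theta(\log n)$ raw index bits per bucket and inflates the test count to $O(k\log(n/k)\log n)$, while the bare random matrix has optimal test count but only linear decoding. Closing this gap is the technical heart of the argument: one resolves each survivor's identity recursively, shrinking the active universe geometrically across levels so that the redundancy sums as a geometric series to $O(k\log(n/k))$ while the decode time accumulates one extra logarithmic factor to $O(k\log^2(n/k))$. I would isolate the composition/recursion lemma first --- verifying that restricting to a shrinking universe preserves $(k,k)$-list-disjunctness and that the outer and inner parameters multiply out correctly --- and treat the parameter bookkeeping across recursion levels as the chief source of risk.
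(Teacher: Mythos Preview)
The paper does not prove this theorem at all; it is quoted verbatim as \cite[Theorem~3.2]{cheraghchi2020combinatorial} and used as a black box in Section~\ref{subsub:efficientDet:2Stage}. So there is no ``paper's own proof'' to compare against.

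That said, your sketch is a faithful outline of the approach in the cited reference. The existence argument via i.i.d.\ Bernoulli$(\Theta(1/k))$ entries and a union bound over $\binom{n}{k}^2$ pairs is the standard route to $m = O(k\log(n/k))$, and you have the per-row separation probability and the union-bound arithmetic right. For the sublinear decoder, the Cheraghchi--Nakos construction indeed achieves the $O(k\log^2(n/k))$ bound by a recursive/multilevel scheme (built on list-recoverable codes) rather than by decoding a flat random matrix, and your description of geometrically shrinking the active universe so that row counts sum to $O(k\log(n/k))$ while decode time picks up one extra $\log(n/k)$ factor captures the essential mechanism. The obstacle you flag --- that a naive per-bucket identity gadget costs $\Theta(\log n)$ bits and blows up the test count --- is exactly the issue their recursion is designed to avoid, so your instinct to isolate the composition lemma first is sound.
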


For any vector $\bX$ with $|\bX| \leq k$, let $\mathrm{Dec}(\cM \odot \bX, \cM)$ be an exact recovery decoding procedure that returns $\supp(\bX)$ when $\cM$ is a $k$-disjunct matrix or returns set $L$ with $L \cap \supp(\bX) = \supp(\bX)$ and $|L| \leq 2k$ when $\cM$ is a $(k, k)$-list disjunct matrix.

\section{Non-adaptive design}
\label{sec:withoutGap}

In this section, we present an algorithm with up to $O(u^u d^3 \log^3{n})$ decoding time to find the defective set.

\subsection{Comments on the technique using auxiliary subset}
\label{sec:withoutGap:comments}

In the seminal work of threshold group testing, Damaschke~\cite{damaschke2006threshold} proposed an idea of using a set $U$ of $u - 1$ defective items as auxiliary pool in the standard group testing strategy. Then the number of tests and the decoding time for identifying the defective set are the same as the standard group testing strategy. This cannot be applied to non-adaptive setting because the defective items in the set of $u - 1$ defective items $U$ can not be identified by this way. Indeed, let $x \in U$ and $\cM$ be a $d$-disjunct matrix. Because of the disjunctiveness property of $\cM$, there exists a test in which it contains $x$ and none of $P \setminus \{ x \}$. By adding the $u - 1$ items in $U$ into this test, the number of defective items in this test is $u - 1$ because $x \in U$. Consequently, the outcome of this test is negative. Thus, the defective item $x$ is identified as negative. We reiterate that any work using the auxiliary subset proposed by Damaschke for the non-adaptive setting may encounter the same mistake. There is a simple way to remedy this. Given a known $u$ defective items, we use $u$ distinct auxiliary pools of $u - 1$ defective items in the standard group testing strategy. Then the defective set is the union set of $u$ defective sets obtained from the decoding procedures in the standard group testing strategy.

\subsection{Single selectors}
\label{sec:withoutGap:singleSelector}

To attain a design with an efficient encoding and decoding algorithm, we propose a notation called \emph{single selector}.

\begin{definition}
Given integers $m, d$, and $n$, with $1 \leq m \leq d \leq n$. A Boolean matrix is called \emph{an $(m, d, n)$-single selector} if for any arbitrary set $D \subseteq [n]$ with $|D| = d$, there exists a row in $\cM$ such that it contains exactly $m$ items in $D$.
\label{def:singleSelector}
\end{definition}

The following lemma presents the existence of single selectors by probabilistic construction (proved in the appendix).

\begin{lemma}
Given integers $m, d$, and $n$, with $1 \leq m < d \leq n$, there exists an $(m, d, n)$-single selector with $O\big( \min\left\{ m^m, (d - m)^{d - m} \right\} d \log{\frac{n}{d}} \big)$ rows. When $m = d$, $t = O\big( d \log{\frac{n}{d}} \big)$.
\label{lem:singleSelection}
\end{lemma}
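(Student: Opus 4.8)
The plan is to use the probabilistic method (expectation/union bound) on a random Bernoulli matrix, choosing the sampling probability $p$ so that for a fixed $d$-set $D \subseteq [n]$ a single row has a non-trivial chance of hitting exactly $m$ of the items in $D$, and then taking enough independent rows so that the failure probability — no row selects exactly $m$ items of $D$ — beats the number of $d$-subsets, which is at most $(en/d)^d$. Concretely, let each entry of a row be $1$ independently with probability $p$. For a fixed $D$ with $|D| = d$, the probability that the row contains exactly $m$ of the items of $D$ is $q := \binom{d}{m} p^m (1-p)^{d-m}$. To make the analysis clean I would consider two regimes governed by whether $m \le d/2$ or $m > d/2$, and in each regime pick $p$ to roughly balance the two factors: taking $p = m/d$ gives $q \ge c\, (m/d)^m ((d-m)/d)^{d-m}\big/\sqrt{d}$ by a standard binomial-coefficient estimate, but the cleaner bound for our target is obtained by the cruder choice $p \approx 1/(d-m+1)$ when $m$ is small (so $(1-p)^{d-m} = \Omega(1)$ and $p^m = \Omega((d-m)^{-m})$, whence $q = \Omega\big(\binom{d}{m}(d-m)^{-m}\big) = \Omega\big((d-m)^{-m}\cdot\,\text{something}\big)$), and symmetrically $p \approx 1 - 1/(m+1)$ when $d - m$ is small, giving $q = \Omega(m^{-(d-m)}\cdot \text{something})$. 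Matching these against $\min\{m^m,(d-m)^{d-m}\}$ is the quantitative heart of the argument.

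Once a lower bound of the form $q \ge 1/\big(c\,\min\{m^m,(d-m)^{d-m}\}\big)$ is in hand, I would take $t$ independent rows; the probability that \emph{no} row selects exactly $m$ items of $D$ is $(1-q)^t \le e^{-qt}$. By the union bound over all $\binom{n}{d} \le (en/d)^d$ choices of $D$, the probability that the construction fails is at most $(en/d)^d e^{-qt}$, which is strictly less than $1$ as soon as $qt > d\ln(en/d)$, i.e. $t = \Theta\big(q^{-1} d \log(n/d)\big) = O\big(\min\{m^m,(d-m)^{d-m}\}\, d \log(n/d)\big)$. Hence such a single selector exists. For the boundary case $m = d$, ``exactly $m$ items of $D$'' means the row contains all of $D$; taking $p = 1/2$ (or any constant) gives $q = 2^{-d}$, but that is too weak — instead I would note that this is precisely the defining property guaranteed by a $(d,1)$-list-disjunct-type / cover-free construction, or more directly apply the same Bernoulli analysis with $p = 1 - 1/(d+1)$ so that $q = p^d = (1 - 1/(d+1))^d = \Omega(1)$, yielding $t = O(d\log(n/d))$ as claimed.

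The main obstacle I anticipate is the binomial estimate: getting $q = \Omega\big(\big(\min\{m^m,(d-m)^{d-m}\}\big)^{-1}\big)$ rather than the weaker $\Omega\big(\big((d/m)^m(d/(d-m))^{d-m}\big)^{-1}\big)$ that the naive $p = m/d$ choice produces — indeed escaping the $(d/u)^u(d/(d-u))^{d-u}$ factor of the prior state of the art is the whole point of the lemma. The trick is that we do not need the binomial peak; we only need \emph{some} row with \emph{exactly} $m$ hits, so we are free to choose $p$ far from $m/d$. Choosing $p$ to be $\Theta(1/(d-m+1))$ (small $m$) makes $(1-p)^{d-m}$ a constant and $\binom{d}{m}p^m = \Theta\big(\binom{d}{m}/(d-m+1)^m\big)$; since $\binom{d}{m} \ge (d/m)^m \ge ((d-m+1)/m)^m \cdot$ (crude bookkeeping) one checks $\binom{d}{m}/(d-m+1)^m = \Omega(1/m^m)$-type bounds fail in general, so the correct reading is $q^{-1} = O\big(\binom{d}{m}^{-1}(d-m+1)^m\big) = O\big((d-m)^{d-m}\big)$ after simplifying $\binom{d}{m} = \binom{d}{d-m} \ge ((d)/(d-m))^{d-m}$ and $(d-m+1)^m \le$ a manageable term; the symmetric choice handles the $m^m$ branch, and taking the minimum of the two bounds over which regime we are in gives the stated $\min\{m^m,(d-m)^{d-m}\}$. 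I would carry out this calculation carefully with explicit use of $\binom{d}{m} \ge \max\{(d/m)^m, (d/(d-m))^{d-m}\}$ and the elementary inequality $(1 - 1/k)^k \ge 1/4$ for $k \ge 2$, and then assemble the union-bound conclusion.
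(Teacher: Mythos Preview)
Your overall framework---a random Bernoulli matrix, lower-bound the single-row success probability $q = \binom{d}{m}p^m(1-p)^{d-m}$, take $t$ independent rows, and union-bound over the $\binom{n}{d} \le (en/d)^d$ choices of $D$---is exactly the paper's approach, and your treatment of the boundary case $m=d$ with $p = d/(d+1)$ matches the paper verbatim.

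Where you diverge from the paper, and where your write-up goes astray, is in the main case $1 \le m < d$. The paper does something much simpler than juggling two regimes of~$p$: it takes $p = 1/d$ throughout, uses $\binom{d}{m} \ge (d/m)^m$ and $(1-1/d)^{d-m} = \Omega(1)$ to get $q = \Omega(1/m^m)$ directly, and then observes that complementing every entry of an $(m,d,n)$-single selector yields a $(d-m,d,n)$-single selector. Applying the first bound with $m$ replaced by $d-m$ and then complementing gives the $(d-m)^{d-m}$ bound for free; taking the better of the two constructions yields the stated $\min\{m^m,(d-m)^{d-m}\}$. You never mention this complementation trick, and it is the cleanest way to get both halves of the minimum.

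Your alternative route (two choices of $p$) can also be made to work, but your arithmetic is tangled. In particular, your claim that with $p = \Theta(1/(d-m+1))$ the bound $\binom{d}{m}/(d-m+1)^m = \Omega(1/m^m)$ ``fails in general'' is incorrect: since $\binom{d}{m} \ge (d/m)^m$ and $d \ge d-m+1$, one has $\binom{d}{m}/(d-m+1)^m \ge (d/(m(d-m+1)))^m \ge 1/m^m$ for every $1 \le m < d$. So this choice of $p$ already gives $q = \Omega(1/m^m)$, not the $(d-m)^{d-m}$ bound you then try to extract (that latter extraction does \emph{not} hold for $m$ close to $d$, e.g.\ $m = d-1$ gives $(d-m+1)^m = 2^{d-1}$ versus $d^{d-m} = d$). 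The correct pairing is: small $p$ (e.g.\ $1/d$) yields the $m^m$ bound; large $p$ (e.g.\ $1-1/d$) yields the $(d-m)^{d-m}$ bound---which is exactly the complementation symmetry the paper exploits.
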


\begin{proof}

Consider a binary Bernoulli matrix $\cM = (m_{ij})$ of size $t \times n$ in which $\Pr[m_{ij} = 1] = p$ and $\Pr[m_{ij} = 0] = 1 - p$. Consider a set $D$ of $d$ columns in $\cM$. Let the event that a row does not contain exactly $m$ ones in $D$ be \emph{the bad event}. The probability that the bad event happens is:
\[
1 - \binom{d}{m} p^m (1 - p)^{d - m}.
\]

By setting $p = 1/d$, we get
\begin{align}
&\binom{d}{m} p^m (1 - p)^{d - m} = \binom{d}{m} \left( \frac{1}{d} \right)^{m} \left( 1 - \frac{1}{d} \right)^{d - m} \nonumber \\
&\geq \left( \frac{d}{m} \right)^m \left( \frac{1}{d} \right)^{m} \left( 1 - \frac{1}{d} \right)^{d - m} = \frac{1}{m^m} \left( 1 - \frac{1}{d} \right)^d \left( 1 + \frac{1}{d - 1} \right)^m \label{eqn:binom} \\
&\geq \frac{1}{m^m} \times \mathrm{e} \left( 1 - \frac{1}{d} \right) \left( 1 + \frac{1}{d - 1} \right)^m = \frac{\mathrm{e}}{m^m} \left( 1 + \frac{1}{d - 1} \right)^{m - 1} \label{eqn:lowerExp} \\
&\geq \frac{\mathrm{e}}{m^m}, \nonumber
\end{align}
where $\mathrm{e}$ is base of the natural logarithm. Equation~\eqref{eqn:binom} is obtained because $\left( \frac{a}{b} \right)^b \leq \binom{a}{b}$ for $0 < a \leq b$. We get~\eqref{eqn:lowerExp} because $\mathrm{e} \big(1 - \frac{1}{d} \big) \leq \big(1 - \frac{1}{d} \big)^d$. Consequently, the probability that the bad event happens for the $t$ rows with respect to $D$ is
\begin{equation}
\left( 1 - \binom{d}{m} p^m (1 - p)^{d - m} \right)^t \leq \left( 1 - \frac{\mathrm{e}}{m^m} \right)^t \leq \exp\left( - \frac{\mathrm{e} t}{m^m} \right), \nonumber
\end{equation}
because $1 - x \leq e^{-x}$ for any $x > 0$. Therefore, the probability that $\cM$ is not an $(m, d, n)$-single selector, i.e., there exists at least one subset $D \subseteq [n]$ with $|D| = d$ such that the bad event happens for the $t$ rows, is up to:
\begin{equation}
\binom{n}{d} \left( 1 - \binom{d}{m} p^m (1 - p)^{d - m} \right)^t \leq \left( \frac{\mathrm{e} n}{d} \right)^d \exp\left( - \frac{\mathrm{e} t}{m^m} \right), \label{eqn:dt}
\end{equation}
where the last equation is obtained because $\binom{a}{b} \leq \left( \frac{\mathrm{e} a}{b} \right)^b$ for $0 < a \leq b$. 
To ensure that there exists an $(m, d, n)$-single selector $\cM$, one needs
\begin{equation}
\left( \frac{\mathrm{e} n}{d} \right)^d \exp\left( - \frac{\mathrm{e} t}{m^m} \right) < 1 \Longleftrightarrow \frac{m^m d}{\mathrm{e}} \left(1 + \log{\frac{n}{d}} \right) < t. \label{eqn:tt}
\end{equation}

Thus, we can choose $t = \lceil \frac{m^m d}{\mathrm{e}} \big(1 + \log{\frac{n}{d}} \big) \rceil + 1 = O\big( m^m d \log{\frac{n}{d}} \big)$ to satisfy~\eqref{eqn:tt}.

We can easily create a $(d - m, d, n)$-single selector $\bar{\cM} = (\bar{m}_{ij})$ from a $t \times n$ $(m, d, n)$-single selector $\cM = (m_{ij})$ by inverting all entries in $\cM$, i.e., $\bar{m}_{ij} = 1 - m_{ij}$ for $i \in [t]$ and $j \in [n]$.

When $m = d$, we choose $p = d/(d+1)$ then~\eqref{eqn:dt} becomes 
\begin{align}
&\binom{n}{d} \left( 1 - \binom{d}{m} p^m (1 - p)^{d - m} \right)^t \nonumber \\
&= \binom{n}{d} \left( 1 - \left( \frac{d}{d + 1} \right)^d \right)^t \leq \left( \frac{\mathrm{e} n}{d} \right)^d \exp\left( - t \left( \frac{d}{d + 1} \right)^d \right).\nonumber % \label{eqn:md1}
\end{align}
This inequality holds if we choose $t = \lceil \mathrm{e} d \left( 1 + \log{\frac{n}{d}} \right) \rceil + 1 = O\big( d \log{\frac{n}{d}} \big)$ because:
\begin{align}
&t > \mathrm{e} d \left( 1 + \log{\frac{n}{d}} \right) \geq \left(1 + \frac{1}{d} \right)^d \log{\left( \frac{\mathrm{e} n}{d} \right)^d} \nonumber \\
&\Longrightarrow \left( \frac{\mathrm{e} n}{d} \right)^d \exp\left( - \frac{\mathrm{e} t}{m^m} \right) < 1. \nonumber
\end{align}
This completes our proof.

\end{proof}

\subsection{Non-adaptive design}
\label{sec:withoutGap:nonadaptive}

Here we present a non-adaptive design to tackle TGT without gap and summarize it in Theorem~\ref{thm:withoutGap:nonadaptive}. When $u = d$, it has been known that the $d$ defective items can be found by using a $d$-disjunct matrix and its complementary matrix~\cite[Lemma 1]{bui2019efficiently}. Because of Theorem~\ref{thm:disjunct}, it takes $O(d^2 \min(d, n))$ tests to identify all $d$ defective items in $O(d^2 \min(d, n) + d\log^2{(n/d)} )$ time. This satisfies the claims in Theorem~\ref{thm:withoutGap:nonadaptive}. Therefore, we only consider the case $u < d$ to prove Theorem~\ref{thm:withoutGap:nonadaptive}.

\subsubsection{Main idea}
\label{sec:withoutGap:nonadaptive:idea}

Our objective is first creating a test that contains exactly $u$ defective items. From these $u$ defective items, we create $u$ sets $S_1, \ldots, S_u \subseteq [n]$ of items such that each set contains exactly $u - 1$ defective items and the sets of the defective items in these $u$ tests are distinct, i.e., $S_i \cap P \neq S_{i^\prime} \cap P$ for any $i \neq i^\prime$. The by augmenting $S_i$ to a $(d - u)$-disjunct matrix $\cM$ for testing and decoding, we can recover set $P \setminus S_i$. Therefore, the defective set $P$ can be recover by setting $P = \bigvee_{i = 1}^u (P \setminus S_i)$.

\subsubsection{Encoding} In the pursuit of attaining an efficient decoding procedure, we initialize the following matrices:
\begin{itemize}
\item Matrix $\cA$ of size $a \times n$ is a $(d, u, n)$-single selector, where $a = O\left( \min\left\{ u^u, (d - u)^{d - u} \right\}d \log{\frac{n}{d}} \right)$ and set $0^0 = 1$ for simplicity.
\item Matrix $\cM$ and $\cB$ of size $m \times n$ and $b \times n$ are $(d - u + 1)$-disjunct and $u$-disjunct matrices, respectively. Specifically, $m = O \big( (d - u)^2 \min(n, d - u + 1) \big)$ and $b = O \big( u^2 \min(n, u) \big)$.
\end{itemize}

The whole encoding procedure is illustrated in Fig.~\ref{fig:Enc}. In this figure, we set $u = 3$, $d = 5$, and $P = \{ j_1, j_2, j_3, j_4, j_5 \}$. Matrix $\cA$ has some special row that contains only $u$ defective items, says $j_1, j_3$, and $j_5$, in the defective set $P$. Let $S = \{j_1, j_2, j_5 \}$. Since matrix $\cB$ is $u$-disjunct, there exists $u = 3$ distinct private rows such that the first row (squared by a solid rectangle) contains $j_1$ but does not contain the items in $S \setminus \{ j_1 \}$, the second row (squared by a long dash dot dot rectangle) contains $j_3$ but does not contain the items in $S \setminus \{ j_3 \}$, and the third row (squared by a square dot rectangle) contains $j_5$ but does not contain the items in $S \setminus \{ j_5 \}$. By applying the exclusion operator (defined later) to $\cA$ and $\cB$, the special row in $\cA$ and the $u$ private rows in $\cB$ generate $u$ rows in which each row contains $u - 1$ items in $S$. The set of the defective items in each row is the complementary set of the defective set in its corresponding private row. In particular, the first, second, and third rows contain $S \setminus \{ j_1 \}, S \setminus \{ j_3 \}$, and $S \setminus \{ j_5 \}$, respectively. These rows are depicted in $u=3$ matrices and their corresponding solid, long dash dot dot, and square dot rectangles, after the first three arrows. By augmenting each row with a $(d - u + 1)$-disjunct matrix $\cM$, the new matrix will contain at least $d - u + 1$ rows, each row contains $u$ defective items. Among these $u$ defective items, there always exist $u - 1$ defective items in the row augmented with $\cM$. Testing each new matrix result in an outcome vector that can be converted into tests in standard group testing. From example, for the row containing $S \setminus \{ j_1 \}$, the outcome vector for testing the new matrix generated by it and $\cM$ is $\cM = \vecc(P \setminus (S \setminus \{ j_1 \}) )$.% The detailed procedures are presented below.

\begin{figure*}
\centering
\includegraphics[scale=0.5]{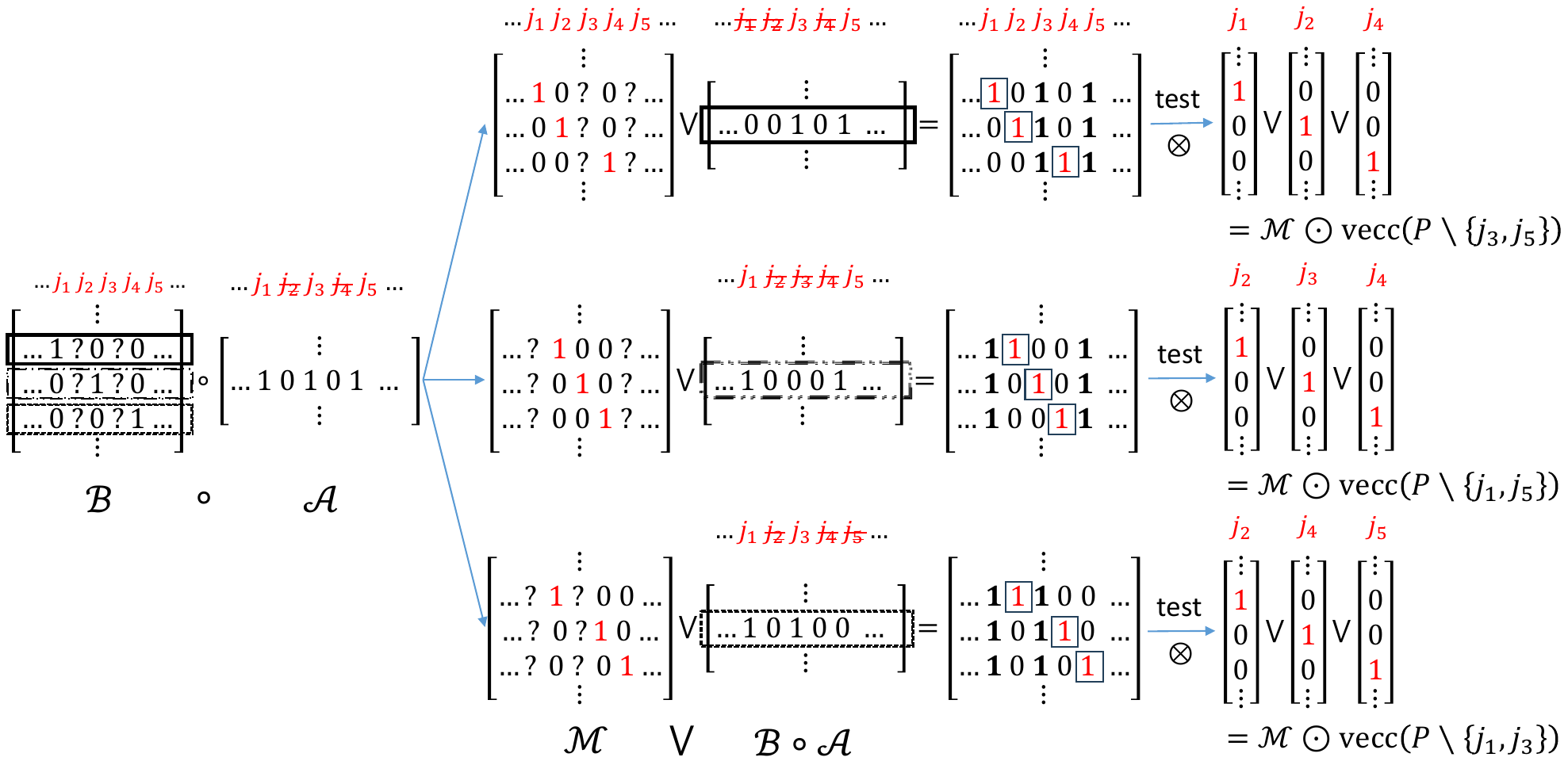}
\caption{Illustration of the encoding procedure for $u = 3, d = 5$, and $P = \{ j_1, j_2, j_3, j_4, j_5 \}$.}
\label{fig:Enc}
\end{figure*}

We define some notations as follows. The exclusion-wise operator between two vectors of same size $\bZ = (z_1, \ldots, z_n)$ and $\bZ^\prime = (z_1^\prime, \ldots, z_n^\prime)$ is $\bZ \circ \bZ^\prime = (z_1^\prime \oplus (z_1^\prime \wedge z_1), \ldots, z_n^\prime \oplus (z_n^\prime \wedge z_n))$, where $\oplus$ is the XOR-bitwise operator. In other words, $\supp(\bZ \circ \bZ^\prime) = \supp(\bZ^\prime) \setminus \supp(\bZ)$. Then the exclusion operator between two matrices $\cA$ and $\cB$ is defined  as $\cB \circ \cA := \begin{bmatrix}
\cB \circ \cA(1, :) \\
\vdots \\
\cB \circ \cA(a, :)
\end{bmatrix}$, where $\cB \circ \cA(i, :) := \begin{bmatrix}
\cB(1, :) \circ \cA(i, :) \\
\vdots \\
\cB(b, :) \circ \cA(i, :) \end{bmatrix}$ for $i = 1, \ldots, a$.

Similarly, the OR-wise operator between two vectors of same size $\bZ = (z_1, \ldots, z_n)$ and $\bZ^\prime = (z_1^\prime, \ldots, z_n^\prime)$ is $\bZ \vee \bZ^\prime = (z_1 \vee z_1^\prime, \ldots, z_n \vee z_n^\prime)$. In other words, $\supp(\bZ \vee \bZ^\prime) = \supp(\bZ) \cup \supp(\bZ^\prime)$. The OR operator between two matrices $\cA$ and $\cB$ is then defined as $
\cB \vee \cA := \begin{bmatrix}
\cB \vee \cA(1, :) \\
\vdots \\
\cB \vee \cA(a, :)
\end{bmatrix}$, where $\cB \vee \cA(i, :) := \begin{bmatrix}
\cB(1, :) \vee \cA(i, :) \\
\vdots \\
\cB(b, :) \vee \cA(i, :) \end{bmatrix}$ for $i = 1, \ldots, a$.

The measurement matrix is defined as
\begin{equation}
\cT :=
\begin{bmatrix}
\cA \\
\cB \circ \cA \\
\cM \vee (\cB \circ \cA)
\end{bmatrix}, \nonumber % \label{testingMatrix}
\end{equation}
and the outcome vector is
\begin{equation}
\bY := \test(\cT) =
\begin{bmatrix}
\vdots \\
\test(\cA(i, :)) \\
\test(\cB(i^\prime, :) \circ \cA(i, :)) \\
\test(\cM \vee (\cB(i^\prime, :) \circ \cA(i, :))) \\
\vdots
\end{bmatrix}, \label{outcomeVector}
\end{equation}
where $i = 1, \ldots, a$ and $i^\prime = 1, \ldots, b$.

\subsubsection{Decoding} For each $i \in \{1, \ldots, a \}$ and $i^\prime \in \{1, \ldots, b \}$, if $\test(\cA(i, :))$ is positive and $\test(\cB(i^\prime, :) \circ \cA(i, :))$ is negative, set $S_{i, i^\prime} = \dec(\test(\cM \vee (\cB(i^\prime, :) \circ \cA(i, :))), \cM)$ and add $S_{i, i^\prime}$ into the defective set $S$.% Note that the decoding complexity does not linearly scale to $m$ because we only consider 

\subsubsection{Correctness}

We first state a useful lemma in utilizing disjunct matrices to identify defective items with auxiliary set. The full proof and complexity are presented in the full version.

\begin{lemma}
Consider $(n, \bar{p}, u)$-TGT, i.e., TGT without gap and the number of defective items is at most $p$. Let $P$ and $\cM$ be a defective set and a $t \times n$ $p$-disjunct matrix, where $|P| \leq p$. Let $\bX \in \{0, 1 \}^n$ be an input vector and $\bY = \test(\cM \vee \bX)$. Then if $\test(\bX) = 0$, for any decoding procedure with the exact recovery in standard group testing with the measurement matrix $\cM$ and the outcome vector $\bY$, the returned defective set is $P \setminus \supp(\bX)$ if $|\supp(\bX) \cap P| = u - 1$ and an empty set if $|\supp(\bX) \cap P| < u - 1$.
\label{lem:CGTwithSet}
\end{lemma}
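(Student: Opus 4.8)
The plan is to analyze the test outcome $\bY = \test(\cM \vee \bX)$ row by row and then invoke the exact-recovery guarantee of the $p$-disjunct matrix $\cM$ in the standard group testing (SGT) model. The key observation is that for each row $\cM(i,:)$, the combined test $\cM(i,:) \vee \bX$ contains, among the defective set $P$, exactly the items in $(\supp(\cM(i,:)) \cap P) \cup (\supp(\bX) \cap P)$. Since the TGT threshold is $u$ and we are told $\test(\bX) = 0$, i.e.\ $|\supp(\bX) \cap P| \leq u - 1$, the $u$-threshold test on the combined row fires (is positive) if and only if $|(\supp(\cM(i,:)) \cap P) \cup (\supp(\bX)\cap P)| \geq u$. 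I would split into the two cases of the lemma.

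First, suppose $|\supp(\bX) \cap P| = u - 1$. Write $Q = \supp(\bX) \cap P$, so $|Q| = u-1$ and $|P \setminus Q| = |P| - (u-1) \leq p - (u-1)$; in particular $P \setminus Q$ is a set of size at most $p - u + 1 \le p$, so $\cM$ being $p$-disjunct is more than enough to separate it (I may only need $(p-u+1)$-disjunctness here, but $p$-disjunct certainly suffices). For a row $\cM(i,:)$, the combined test is positive iff $|(\supp(\cM(i,:)) \cap P) \cup Q| \geq u$, which, since $Q \subseteq P$ and $|Q| = u-1$, happens iff $\supp(\cM(i,:))$ contains at least one item of $P \setminus Q$. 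Hence the TGT outcome vector on the rows $\cM \vee \bX$ coincides exactly with the SGT outcome vector $\cM \odot \vecc(P \setminus Q)$ of the input whose support is $P \setminus Q$. Now $|P\setminus Q| \le p$ and $\cM$ is $p$-disjunct, so any exact-recovery SGT decoder applied to $\cM$ and this outcome vector returns precisely $P \setminus Q = P \setminus \supp(\bX)$, which is the claim.

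Second, suppose $|\supp(\bX) \cap P| = r < u - 1$. Then for every row $\cM(i,:)$ we have $|(\supp(\cM(i,:)) \cap P) \cup (\supp(\bX)\cap P)| \le |\supp(\cM(i,:)) \cap P| + r$. The subtlety is that some rows of $\cM$ may still contain up to all of $P$, so this union can reach size $|P| \ge u$ and the test can be positive. So the outcome vector need not be all-zero. However, the outcome vector is exactly $\cM \odot \vecc(R)$ restricted to the $u$-threshold, where now the ``positive'' rows are those with $|\supp(\cM(i,:)) \cap P| \ge u - r$; equivalently this is the SGT outcome vector of an input supported on $P$ but tested with threshold $u-r \ge 2$. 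The point is that no \emph{single} item can ever appear alone: since $u - r \ge 2$, any row isolating a single defective (which exists in abundance by $p$-disjunctness) is negative, so the SGT exact-recovery decoder — which only ever outputs an item $j$ when it is ``confirmed'' by some row where $j$ is the unique listed item not yet excluded — outputs nothing; the decoder returns the empty set. I would make this precise by recalling that an exact-recovery decoder for a $p$-disjunct matrix in SGT declares $j$ defective only if there is a positive row whose support, minus the already-identified/excluded non-defectives, equals $\{j\}$, and arguing that under threshold $\ge 2$ every such row is in fact negative, so the returned set is empty.

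The main obstacle is the second case: pinning down precisely what ``any decoding procedure with exact recovery in SGT'' does when fed an outcome vector that does \emph{not} come from a legitimate SGT instance (it comes from a threshold-$(u-r)$ instance). I expect the cleanest route is to avoid reasoning about arbitrary decoders and instead observe that the outcome vector in case two is \emph{dominated} (entrywise $\le$) by the SGT outcome vector of the empty input, except it is not — so instead I will argue that it equals the SGT outcome of the empty set: I claim every row is actually negative when $r < u-1$ only if additionally $|P| \le u - 1$; in general rows can be positive, so the honest statement is that the decoder, being an exact-recovery decoder, must when run on $\cM \odot \vecc(\emptyset)$-consistent data return $\emptyset$, and I need the positive rows here to still be ``consistent with $\emptyset$'' from the decoder's perspective, i.e.\ explainable. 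This is where I would lean on the specific structure of the Cheraghchi–Nakos decoder of Theorem~\ref{thm:disjunct} rather than an arbitrary one, or alternatively strengthen the hypothesis to note that in our application (Section~\ref{sec:withoutGap:nonadaptive}) the matrix $\cM$ is $(d-u+1)$-disjunct and the relevant $\bX = \cB(i',:) \circ \cA(i,:)$ is a single row, so $|\supp(\cM(i,:) \cap P) \cup \supp(\bX)\cap P| \le (d-u+1) + (u-1) = d$ and the union never forces spurious confirmations. I would state the lemma's second half for such a decoder and defer the routine verification to the full version, as the authors indicate.
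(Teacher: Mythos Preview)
Your first case is correct and in fact cleaner than the paper's argument: you show directly that when $|\supp(\bX)\cap P|=u-1$ the TGT outcome vector $\bY$ coincides with the SGT outcome vector $\cM\odot\vecc(P\setminus Q)$, so the exact-recovery guarantee of the $p$-disjunct matrix applies verbatim and returns $P\setminus Q$. The paper instead argues via ``private rows'' that each $r\in Q$ is eliminated (its private row has only $u-1$ defectives after the OR) while each $r'\in P\setminus Q$ survives (every row containing $r'$ now has at least $u$ defectives); this is equivalent but less direct, and your reduction to a bona fide SGT instance is the tidier route.

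Your second case is where the gap lies. You correctly note that $\bY$ need not be a legitimate SGT outcome, and you even mention the private-row tool, but then you drift into speculation about threshold-$(u-r)$ instances, consistency with $\cM\odot\vecc(\emptyset)$, and the internals of the Cheraghchi--Nakos decoder, none of which is needed. The paper's argument (stated there only for $r\in P$, but the identical line covers every $j\in[n]$) is simply this: by $p$-disjunctness each item $j$ has a row $i_j$ containing $j$ and no element of $P\setminus\{j\}$; after the OR with $\bX$ that row contains at most $1+|\supp(\bX)\cap P|\le 1+(u-2)<u$ defectives, so $y_{i_j}=0$. Hence every item lies in some negative test, and any decoder that respects the basic SGT rule ``an item appearing in a negative test is non-defective'' returns the empty set. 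Your concern about what a \emph{truly arbitrary} exact-recovery decoder might do on a non-SGT outcome is a legitimate fine point, but the paper's own proof does not address it either; the intended reading throughout is a COMP-style decoder, and under that reading the private-row observation closes Case~2 in two lines without any of the detours you take.
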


As described in Section~\ref{sec:withoutGap:nonadaptive:idea}, we need to prove that there exist $u$ sets $S_1, \ldots, S_u \subseteq [n]$ such that each set contains exactly $u - 1$ defective items and $S_i \cap P \neq S_{i^\prime} \cap P$ for any $i \neq i^\prime$. Indeed, since $\cA$ is a $(d, u, n)$-single selector and $|P| = d$, there exists a row $r^*$ such that $|\supp(\cA(r^*, :)) \cap P| = u$. Let $Q = \supp(\cA(r^*, :)) \cap P$. Because $\cB$ is a $u$-disjunct matrix, for any $q \in Q$, there exists a row, denoted as $r_q$, such that $\supp(\cB(r_q, :)) \cap Q = \{ q \}$. On the other hand, $\supp(\cB(r_q, :) \circ \cA(r^*, :)) = (\supp(\cA(r^*, :)) \setminus \supp(\cB(r_q, :)) \cup (\supp(\cB(r_q, :) \setminus \supp(\cA(r^*, :)) ) $. Therefore, $\supp(\cB(r_q, :) \circ \cA(r^*, :)) \cap P = (\supp(\cA(r^*, :)) \cap P) \setminus (\supp(\cB(r_q, :) \cap P) = Q \setminus \{ q \}$. Since $|Q| = u$, there thus exist $u$ sets $S_1, \ldots, S_u \subseteq [n]$ such that each set contains exactly $u - 1$ defective items and $S_i \cap P \neq S_{i^\prime} \cap P$ for any $i \neq i^\prime$. On the other hand, $\test(\cA(r^*, :))$ is positive and $\test(\cB(r_q, :) \circ \cA(r^*, :))$ is negative, by using Lemma~\ref{lem:CGTwithSet} with the fact that $\cM$ is a $(d - u + 1)$-disjunct matrix, we get $\test(\cM \vee (\cB(r_q, :) \circ \cA(r^*, :))) = \test(\cM, Q \setminus \{ q \})$. Therefore, set $S_{r^*, r_q} = \dec(\test(\cM \vee (\cB(r_q, :) \circ \cA(r^*, :))), \cM)$ is $P \setminus (Q \setminus \{ q \})$. It is straightforward that $P = \bigcup_{q \in Q} P \setminus (Q \setminus \{ q \})$. This implies the set $S$ always contains all defective items.

To complete our proof, we need to show that the decoding procedure never adds false defective items to the defective set. Consider $i \in [a]$. If $|\supp(\cA(i, :)) \cap P| < u$ then $\test(\cA(i, :))$ is negative. Therefore, no items are added to $S$. If $\test(\cA(i, :))$ is positive and $\test(\cB(i^\prime, :) \circ \cA(i, :))$ is positive, again, no items are added to $S$. If $\test(\cA(i, :))$ is positive and $\test(\cB(i^\prime, :) \circ \cA(i, :))$ is negative, there are only two cases:
\begin{itemize}
\item $|\cA(i, :) \cap P| = u$ and $|\supp(\cB(i^\prime, :) \circ \cA(i, :)) \cap P| < u$.
\item $|\cA(i, :) \cap P| > u$ and $|\supp(\cB(i^\prime, :) \circ \cA(i, :)) \cap P| < u$.
\end{itemize}

In both cases, by Lemma~\ref{lem:CGTwithSet}, the returned defective set by using any decoding procedure with the exact recovery in standard group testing never contains negative items.

\subsubsection{Complexity}

By using Lemma~\ref{lem:singleSelection}, $a = O\left( \min\left\{ u^u, (d - u)^{d - u} \right\}d \log{\frac{n}{d}} \right)$, where we set $0^0 = 1$ for simplicity. As in Theorem~\ref{thm:disjunct}, a $t \times n$ $p$-disjunct matrix $\cM$ can be efficiently decoded in $O \big( t + p \log^2{(n/p)} \big)$ time with $t = O \big( p^2 \min(n, p) \big)$ where $\min(g, h) = \min\{ \log{g}, (\log_h{g})^2 \}$. Therefore, we get $m = O \big( (d - u)^2 \min(n, d - u + 1) \big)$ and $b = O \big( u^2 \min(n, u) \big)$. As a result, the number of tests is:
\begin{align}
&a + b \times a + m \times b \times a \\
&= O \left( d(d - u)^2 u^2 \log{\frac{n}{d}} \times \min(n, d - u + 1) \min(n, u) \times \min\left\{ u^u, (d - u)^{d - u} \right\} \right) \nonumber \\
&= O \left( d^3 (\log^2{n}) \log{\frac{n}{d}} \right), \mbox{ if $u$ is a constant.}
\end{align}

The decoding complexity is
\begin{align}
&O(a) + O(b \times a) + O(b \times a ) \times O \big( (d - u)^2 \min(n, d - u + 1) + (d - u) \log^2{\frac{n}{d - u}} \big) \nonumber \\
&= O \left( d(d - u) u^2 \log{\frac{n}{d}} \min(n, u) \min\left\{ u^u, (d - u)^{d - u} \right\} \times \left( (d - u) \min(n, d - u + 1) + \log^2{\frac{n}{d - u}} \right) \right). \nonumber
\end{align}

\subsubsection{Example} We delineate our encoding and decoding procedures by an example here.

\emph{Encoding:} Suppose $n = 12, d = 3, u = 2$, and the defective set is $P = \{1, 8, 11 \}$. Matrix $\cA$ of size $6 \times 12$ is a $(d = 3, u = 2, n = 12)$-single selector and matrices $\cB$ and $\cM$ of size $9 \times 12$ are $(d + u - 1) = u = 2$-disjunct matrices. We get $a = 6, m = b = 9$. Because the number of rows in $\cT$ is $amb = 486$ which is large to be written in this paper, we only demonstrate some columns and rows of interest here. Matrices $\cA, \cB$, and $\cM$ are given as follows.

\begin{align}
\cA &= \left[ \begin{tabular}{cccccccccccc}
\textbf{0} & 1 & 0 & 1 & 0 & 0 & 1 & \textbf{0} & 0 & 0 & \textbf{0} & 1 \\
\textbf{1} & 0 & 0 & 1 & 0 & 1 & 0 & \textbf{0} & 1 & 1 & \textbf{0} & 0 \\
\textbf{1} & 0 & 1 & 0 & 1 & 1 & 0 & \textbf{1} & 1 & 1 & \textbf{1} & 0 \\
\textbf{0} & 1 & 1 & 0 & 1 & 0 & 1 & \textbf{1} & 0 & 0 & \textbf{1} & 1 \\
\textbf{1} & 1 & 0 & 0 & 1 & 1 & 0 & \textbf{0} & 0 & 0 & \textbf{1} & 1 \\
\textbf{0} & 0 & 1 & 1 & 0 & 0 & 1 & \textbf{1} & 1 & 1 & \textbf{0} & 0
\end{tabular}
\right], \nonumber \\
\cB = \cM &= \left[
\begin{tabular}{cccccccccccc}
\textbf{0} & 0 & 0 & 0 & 0 & 0 & 1 & \textbf{1} & 1 & 1 & \textbf{0} & 0 \\
\textbf{0} & 0 & 0 & 1 & 1 & 1 & 0 & \textbf{0} & 0 & 1 & \textbf{0} & 0 \\
\textbf{1} & 1 & 1 & 0 & 0 & 0 & 0 & \textbf{0} & 0 & 1 & \textbf{0} & 0 \\
\textbf{0} & 0 & 1 & 0 & 0 & 1 & 0 & \textbf{0} & 1 & 0 & \textbf{1} & 0 \\
\textbf{0} & 1 & 0 & 0 & 1 & 0 & 0 & \textbf{1} & 0 & 0 & \textbf{1} & 0 \\
\textbf{1} & 0 & 0 & 1 & 0 & 0 & 1 & \textbf{0} & 0 & 0 & \textbf{1} & 0 \\
\textbf{0} & 1 & 0 & 1 & 0 & 0 & 0 & \textbf{0} & 1 & 0 & \textbf{0} & 1 \\
\textbf{0} & 0 & 1 & 0 & 1 & 0 & 1 & \textbf{0} & 0 & 0 & \textbf{0} & 1 \\
\textbf{1} & 0 & 0 & 0 & 0 & 1 & 0 & \textbf{1} & 0 & 0 & \textbf{0} & 1 
\end{tabular}
\right]. \nonumber% \label{matrixBM}
\end{align}

Since $P = \{1, 8, 11 \}$, we only consider columns $1, 8$, and $11$ in any matrix considered later. Let us denote $\cY \mid_{P}$ be the pruning matrix of $\cY$ restricted by the columns whose indices are in $P$. The output matrices between $\cB$ and $\cA$ restricted to $P$ by using the exclusion operator are:

\begin{alignat}{3}
(\cB \circ \cA(1, :))\mid_P &= \left[ \begin{tabular}{ccc}
0 & 0 & 0 \\
0 & 0 & 0 \\
0 & 0 & 0 \\
0 & 0 & 0 \\
0 & 0 & 0 \\
0 & 0 & 0 \\
0 & 0 & 0 \\
0 & 0 & 0 \\
0 & 0 & 0
\end{tabular}
\right],
(\cB \circ \cA(2, :))\mid_P &= \left[ \begin{tabular}{ccc}
1 & 0 & 0 \\
1 & 0 & 0 \\
0 & 0 & 0 \\
1 & 0 & 0 \\
1 & 0 & 0 \\
0 & 0 & 0 \\
1 & 0 & 0 \\
1 & 0 & 0 \\
0 & 0 & 0
\end{tabular}
\right], (\cB \circ \cA(3, :))\mid_P &= \left[ \begin{tabular}{ccc}
1 & 0 & 1 \\
1 & 1 & 1 \\
0 & 1 & 1 \\
1 & 1 & 0 \\
\textbf{1} & \textbf{0} & \textbf{0} \\
\textbf{0} & \textbf{1} & \textbf{0} \\
1 & 1 & 1 \\
1 & 1 & 1 \\
\textbf{0} & \textbf{0} & \textbf{1}
\end{tabular}
\right], \nonumber \\
(\cB \circ \cA(4, :))\mid_P &= \left[ \begin{tabular}{ccc}
\textbf{0} & \textbf{0} & \textbf{1} \\
0 & 1 & 1 \\
0 & 1 & 1 \\
\textbf{0} & \textbf{1} & \textbf{0} \\
\textbf{0} & \textbf{0} & \textbf{0} \\
\textbf{0} & \textbf{1} & \textbf{0} \\
0 & 1 & 1 \\
0 & 1 & 1 \\
\textbf{0} & \textbf{0} & \textbf{1}
\end{tabular}
\right], (\cB \circ \cA(5, :))\mid_P &= \left[ \begin{tabular}{ccc}
1 & 0 & 1 \\
1 & 0 & 1 \\
\textbf{0} & \textbf{0} & \textbf{1} \\
\textbf{1} & \textbf{0} & \textbf{0} \\
\textbf{1} & \textbf{0} & \textbf{0} \\
\textbf{0} & \textbf{0} & \textbf{0} \\
1 & 0 & 1 \\
1 & 0 & 1 \\
\textbf{0} & \textbf{0} & \textbf{1}
\end{tabular}
\right], (\cB \circ \cA(6, :))\mid_P &= \left[ \begin{tabular}{ccc}
0 & 0 & 0 \\
0 & 1 & 0 \\
0 & 1 & 0 \\
0 & 1 & 0 \\
0 & 0 & 0 \\
0 & 1 & 0 \\
0 & 1 & 0 \\
0 & 1 & 0 \\
0 & 0 & 0
\end{tabular}
\right] \nonumber
\end{alignat}

For $u = 2$, the test outcomes for matrix $\cA$ is $\test(\cA(1, :)) = \test(\cA(2, :)) = \test(\cA(6, :))= 0, \test(\cA(3, :)) = \test(\cA(4, :)) = \test(\cA(5, :)) = 1$. For the tests on $\cM \vee (\cB \circ \cA)$, we only show the tests of interest as follows.
\begin{alignat}{2}
\test(\cB(5, :) \circ \cA(3, :)) &= 0, \test(\cM \vee (\cB(5, :) \circ \cA(3, :))) &= [1, 0, 0, 1, 1, 1, 0, 0, 1]^T; \nonumber \\
\test(\cB(6, :) \circ \cA(3, :)) &= 0, \test(\cM \vee (\cB(6, :) \circ \cA(3, :))) &= [0, 0, 1, 1, 1, 1, 0, 0, 1]^T; \nonumber \\
\test(\cB(9, :) \circ \cA(3, :)) &= 0, \test(\cM \vee (\cB(9, :) \circ \cA(3, :))) &= [1, 0, 1, 0, 1, 1, 0, 0, 1]^T; \nonumber \\
\test(\cB(1, :) \circ \cA(4, :)) &= 0, \test(\cM \vee (\cB(1, :) \circ \cA(4, :))) &= [1, 0, 1, 0, 1, 1, 0, 0, 1]^T; \nonumber \\
\test(\cB(4, :) \circ \cA(4, :)) &= 0, \test(\cM \vee (\cB(4, :) \circ \cA(4, :))) &= [0, 0, 1, 1, 1, 1, 0, 0, 1]^T; \nonumber \\
\test(\cB(5, :) \circ \cA(4, :)) &= 0, \test(\cM \vee (\cB(5, :) \circ \cA(4, :))) &= [0, 0, 0, 0, 1, 1, 0, 0, 1]^T; \nonumber \\
\test(\cB(6, :) \circ \cA(4, :)) &= 0, \test(\cM \vee (\cB(6, :) \circ \cA(4, :))) &= [0, 0, 1, 1, 1, 1, 0, 0, 1]^T; \nonumber \\
\test(\cB(9, :) \circ \cA(4, :)) &= 0, \test(\cM \vee (\cB(9, :) \circ \cA(4, :))) &= [1, 0, 1, 0, 1, 1, 0, 0, 1]^T; \nonumber \\
\test(\cB(3, :) \circ \cA(5, :)) &= 0, \test(\cM \vee (\cB(3, :) \circ \cA(5, :))) &= [0, 0, 1, 1, 1, 1, 0, 0, 1]^T; \nonumber \\
\test(\cB(4, :) \circ \cA(5, :)) &= 0, \test(\cM \vee (\cB(4, :) \circ \cA(5, :))) &= [1, 0, 0, 1, 1, 1, 0, 0, 1]^T; \nonumber \\
\test(\cB(5, :) \circ \cA(5, :)) &= 0, \test(\cM \vee (\cB(5, :) \circ \cA(5, :))) &= [1, 0, 0, 1, 1, 1, 0, 0, 1]^T; \nonumber \\
\test(\cB(6, :) \circ \cA(5, :)) &= 0, \test(\cM \vee (\cB(6, :) \circ \cA(5, :))) &= [0, 0, 0, 0, 1, 1, 0, 0, 1]^T; \nonumber \\
\test(\cB(9, :) \circ \cA(5, :)) &= 0, \test(\cM \vee (\cB(9, :) \circ \cA(5, :))) &= [0, 0, 1, 1, 1, 1, 0, 0, 1]^T. \nonumber
\end{alignat}

\emph{Decoding:} Set $P = \emptyset$. For $i = 1, 2, 6$, we get $\test(\cA(1, :)) = \test(\cA(2, :)) = \test(\cA(6, :))= 0$. Therefore, we do not proceed to decode the test outcomes generated by rows $\cA(1, :), \cA(2, :)$,  and $\cA(6, :)$. From now, we examine rows $\cA(3, :), \cA(4, :)$, and $\cA(5, :)$. Consider row $\cA(3, :)$. Because $\test(\cB(1, :) \circ \cA(3, :)) = \test(\cB(2, :) \circ \cA(3, :)) = \test(\cB(3, :) \circ \cA(3, :)) = \test(\cB(4, :) \circ \cA(3, :)) = \test(\cB(7, :) \circ \cA(3, :)) = \test(\cB(8, :) \circ \cA(3, :)) = 1$, we only consider $\test(\cB(5, :) \circ \cA(3, :)) = \test(\cB(6, :) \circ \cA(3, :)) = \test(\cB(9, :) \circ \cA(3, :)) = 0$. We get $S_{5, 3} = \dec(\test(\cM \vee (\cB(5, :) \circ \cA(3, :))), \cM) = \{ 8, 11 \}$, $S_{6, 3} = \dec(\test(\cM \vee (\cB(6, :) \circ \cA(3, :))), \cM) = \{ 1, 11 \}$, and $S_{9, 3} = \dec(\test(\cM \vee (\cB(9, :) \circ \cA(3, :))), \cM) = \{ 1, 8 \}$. The defective set after processing row $\cA(3, :)$ and the tests generated by it is $P = P \cup S_{5, 3} \cup S_{6, 3} \cup S_{9, 3} = \{1, 8, 11 \}$.

Similarly, for row $\cA(4, :)$, we only consider $\test(\cB(1, :) \circ \cA(4, :)) = \test(\cB(4, :) \circ \cA(4, :)) = \test(\cB(5, :) \circ \cA(4, :)) = \test(\cB(6, :) \circ \cA(4, :)) = \test(\cB(9, :) \circ \cA(4, :)) = 0$. Then we get $S_{1, 4} = \{1, 8 \}, S_{4, 4} = \{1, 11 \}, S_{5, 4} = \emptyset, S_{6, 4} = \{1, 11 \}, S_{9, 4} = \{1, 8 \}$. The defective set after processing row $\cA(4, :)$ and the tests generated by it is $P = P \cup S_{1, 4} \cup S_{4, 4} \cup S_{5, 4} \cup S_{6, 4} \cup S_{9, 4} = \{1, 8, 11 \} \cup \{1, 8 \} \cup \{1, 11 \} \cup \emptyset \cup \{1, 11 \} \cup \{1, 8 \} = \{1, 8, 11 \}$. For row $\cA(5, :)$, we only consider $\test(\cB(3, :) \circ \cA(5, :)) = \test(\cB(4, :) \circ \cA(5, :)) = \test(\cB(5, :) \circ \cA(5, :)) = \test(\cB(6, :) \circ \cA(5, :)) = \test(\cB(9, :) \circ \cA(5, :)) = 0$. And therefore, we get $S_{3, 5} = \{1, 8 \}, S_{4, 5} = \{8, 11 \}, S_{5, 5} = \{8, 11 \}, S_{6, 5} = \emptyset, S_{9, 5} = \{1, 8 \}$. Consequently, the defective set after processing row $\cA(5, :)$ and the tests generated by it is $P = P \cup S_{3, 5} \cup S_{4, 5} \cup S_{5, 5} \cup S_{6, 5} \cup S_{9, 5} = \{1, 8, 11 \}$. This is also the final defective set.

\section{Non-adaptive and 2-stage designs for $u = 2$}
\label{sec:efficient}

In this section, we present designs with decoding time of $O(d^2 \log^2{(n/d)})$ to find the defective set for $(n, d, \ell = 1, u = 2)$-TGT. When $u = 2$, we only need two rows in which each row contains exactly one defective item and the two defective items in the two rows are distinct. This idea was proposed in~\cite{bui2023concomitant} for a randomized and non-adaptive design. Since our designs are deterministic, the technical details are different from theirs. We first present the notion of selectors which will be used for our designs later.

\subsection{Selectors}
\label{sec:pre:selectors}

To have an efficient decoding procedure, we recall a tool named \emph{selectors} which were introduced by De Bonis et al.~\cite{de2005optimal}.

\begin{definition}~\cite{de2005optimal}
Given integers $k$, $m$, and $n$, with $1 \leq m \leq k \leq n$, we say that a Boolean matrix M with $t$ rows and $n$ columns is a $(k, m, n)$-selector if any submatrix of $\cM$ obtained by choosing $k$ out of $n$ arbitrary columns of $\cM$ contains at least $m$ distinct rows of the identity matrix $\cI_k$. The integer $t$ is the size of the $(k, m, n)$-selector.
\label{def:selector}
\end{definition}

%The authors also derive an upper bound for $t$ based on graph construction.

\begin{theorem}~\cite[Theorem 1]{de2005optimal}
Given integers $1 \leq m \leq k \leq n$, there exists a $(k, m, n)$-selector of size $t$, with
\[
t < \frac{\mathrm{e} k^2}{k - m + 1} \log{\frac{n}{k}} + \frac{\mathrm{e} k(2k - 1)}{k - m + 1}.
\]
\label{thm:selector}
\end{theorem}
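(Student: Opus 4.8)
The plan is to prove the statement by the probabilistic method, in the same spirit as the proof of Lemma~\ref{lem:singleSelection}. I would take $\cM = (m_{ij})$ to be a random $t \times n$ binary matrix with independent entries, $\Pr[m_{ij} = 1] = p$ and $\Pr[m_{ij} = 0] = 1 - p$, and show that for a suitable value of $t$ (and the choice $p = 1/k$) the probability that $\cM$ \emph{fails} to be a $(k,m,n)$-selector is strictly less than $1$; this forces the existence of a selector of the stated size.

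The first step is to restate the selector property as an isolation property. Fix a $k$-subset $K = \{j_1 < \cdots < j_k\} \subseteq [n]$ and identify the columns of the induced submatrix with $[k]$. The standard basis vector $e_\ell \in \{0,1\}^k$ occurs as a row of this submatrix if and only if some row of $\cM$ has a $1$ in column $j_\ell$ and $0$ in the other $k-1$ columns of $K$; call such a column \emph{isolated within $K$}. Thus $\cM$ is a $(k,m,n)$-selector precisely when every $k$-subset $K$ has at least $m$ columns isolated within $K$, and $\cM$ fails exactly when there is a pair $(K, J)$ with $|K| = k$, $J \subseteq K$, $|J| = k-m+1$, such that no column of $J$ is isolated within $K$.

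Next, for a fixed pair $(K,J)$ and a single row of $\cM$, the row isolates some column of $J$ within $K$ iff it has exactly one $1$ among the $k$ coordinates of $K$ and that $1$ lies in $J$, which happens with probability $(k-m+1)\,p(1-p)^{k-1}$, independently of the particular $(K,J)$. By independence of the $t$ rows, the ``bad event'' $E_{K,J}$ has probability $\bigl(1 - (k-m+1)p(1-p)^{k-1}\bigr)^t$. A union bound over the at most $\binom{n}{k}\binom{k}{k-m+1}$ pairs $(K,J)$, combined with $1 - x \le \mathrm{e}^{-x}$, the estimate $(1-1/k)^{k-1} \ge \mathrm{e}^{-1}$ (so that $p(1-p)^{k-1} \ge 1/(\mathrm{e} k)$ at $p = 1/k$), and $\binom{n}{k} \le (\mathrm{e} n/k)^k$, bounds the failure probability by
\[
\binom{n}{k}\binom{k}{k-m+1}\exp\!\Bigl(-\tfrac{t(k-m+1)}{\mathrm{e} k}\Bigr).
\]
Requiring this to be less than $1$ and taking logarithms yields the threshold $t > \frac{\mathrm{e} k}{k-m+1}\ln\!\bigl(\binom{n}{k}\binom{k}{k-m+1}\bigr)$; separating the $\log(n/k)$ contribution from the constant contributions in the binomial estimates then gives the claimed bound, $t < \frac{\mathrm{e} k^2}{k-m+1}\log\frac{n}{k} + \frac{\mathrm{e} k(2k-1)}{k-m+1}$.

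I expect the only delicate point to be the bookkeeping in the last step: squeezing the additive second term down to exactly $\frac{\mathrm{e} k(2k-1)}{k-m+1}$ requires keeping $\binom{n}{k} \le (\mathrm{e} n/k)^k$ (rather than a cruder $n^k$ bound) and estimating the $\binom{k}{k-m+1}$ factor carefully, and exploiting that the slack in $(1-1/k)^{k-1}\ge \mathrm{e}^{-1}$ is substantial for small $k$, which is exactly where the asymptotic constant $2k-1$ is tightest. Everything else is routine first-moment computation.
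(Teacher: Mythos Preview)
The paper does not supply its own proof of this statement: Theorem~\ref{thm:selector} is quoted verbatim from De Bonis, Gąsieniec, and Vaccaro~\cite{de2005optimal} and used as a black box in Sections~\ref{sec:efficientDet:1Stage}--\ref{subsub:efficientDet:2Stage}. There is therefore nothing in the present manuscript to compare your argument against.

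That said, your proposal is exactly the probabilistic argument used in~\cite{de2005optimal}: a random Bernoulli$(1/k)$ matrix, the reformulation ``failure $\Leftrightarrow$ some $(K,J)$ with $|J|=k-m+1$ has no isolated column,'' the single-row isolation probability $(k-m+1)p(1-p)^{k-1}$, and a union bound over $\binom{n}{k}\binom{k}{k-m+1}$ pairs. Your own caveat is accurate: the only place requiring care is the final arithmetic, where one must combine $\ln\binom{n}{k}\le k\ln(en/k)=k\ln(n/k)+k$ and $\ln\binom{k}{k-m+1}\le k-1$ (which holds since $\binom{k}{j}\le\binom{k}{\lfloor k/2\rfloor}<\mathrm{e}^{k-1}$ for $k\ge 2$, with the degenerate case $k=1$ handled separately) to obtain the additive term $\frac{\mathrm{e}k(2k-1)}{k-m+1}$; the slack in $(1-1/k)^{k-1}\ge \mathrm{e}^{-1}$ indeed helps at small $k$. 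So your plan is correct and matches the original source, just not anything in this paper.
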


\subsection{Main idea}
\label{sub:efficientDet}

Our main idea consists of two parts:
\begin{itemize}
\item Use a $(2d, d + 2, n)$-selector to obtain at least two indicating subsets in which each contains only one defective item and non-defective items.
\item For each indicating subset, add it to every test in SGT to find the defective items. In particular, a test is positive if at least one more defective item is present and negative otherwise. Hence, the standard group testing techniques can be applied to find the defective items.
\end{itemize}

We present the results of non-adaptive and 2-stage designs in Theorem~\ref{thm:u2Det} and their details in the subsequent sections. Note that the proofs of their complexities are presented in the full version.

\subsection{Non-adaptive design}
\label{sec:efficientDet:1Stage}

\textbf{Encoding:} We first initialize a $(2d, d + 2, n)$-selector $\cA$ of size $a \times n$ and an $m \times n$ $d$-disjunct matrix $\cM$. Then we generate three types of vectors: indicator vector $\bY := \test(\cA)$, reference vector $\bR := \test(\cA \vee \cA)$, and identification vector $\bF := \test(\cM \vee \cA)$. The indicator and reference vectors are used to locate two tests such that they contain two distinct defective items and each contains only one defective item. The identification vector is then to recover the defective set based on these two tests.

\textbf{Decoding:} We first find two indices $i_1 \neq i_2 \in [a]$ by scanning the indicator vector $\bY$ and the reference vector $\bR$ such that $\test(\cA(i_1, :)) = \test(\cA(i_2, :)) = 0$ and $\test(\cA(i_1, :) \vee \cA(i_2, :)) = 1$. Then the defective set is $P = \dec(\test(\cM \vee \cA(i_1, :)), \cM) \cup \dec(\test(\cM \vee \cA(i_2, :)), \cM)$.

\textbf{Correctness:} Consider an arbitrary set $C \subset [n]$ with $C \cap P = P$ and $|C| = 2d$. From Definition~\ref{def:selector}, since $\cA$ is a $(2d, d + 2, n)$-selector, there exists at least $d + 2$ distinct rows of the identity matrix $\cI_{2d}$. Because $|P| = d \geq 2$, there must exist two rows $i_1$ and $i_2$ in $\cA$ such that two distinct defective items belong to them and each row contains a single defective item. In other words, $\supp(\cA(i_1, :)) \cap P = \{ p_1 \}$ and $\supp(\cA(i_2, :)) \cap P = \{ p_2 \}$ for $p_1 \neq p_2 \in P$. Because $\test(\cA(i_1, :)) = \test(\cA(i_2, :)) = 0$, by Lemma~\ref{lem:CGTwithSet}, we get $\dec( \test(\cM \vee \cA(i_1, :)), \cM) = P \setminus \{ p_1 \}$ and $\dec( \test(\cM \vee \cA(i_2, :)), \cM) = P \setminus \{ p_2 \}$. Therefore, we always get $P \subseteq S$.

Now we prove that $S$ never contains false defective items. Indeed, because we only process row $i$ with $\test(\cA(i, :)) = 0$, row $\cA(i, :)$ satisfies $|\supp(\cA(i, :)) \cap P| \leq u - 1 = 1$. Therefore, by Lemma~\ref{lem:CGTwithSet}, the procedure $\dec( \test(\cM \vee \cA(i, :)), \cM)$ either return to a set of defective items or an empty set. Set $S$ thus never contains false defective items.

\textbf{Complexity:} By using the result in Theorem~\ref{thm:selector}, the number of tests in $\cA$ is $a = O \big( d \log{\frac{n}{d}} \big)$. By Theorem~\ref{thm:disjunct}, an $m \times n$ $d$-disjunct matrix $\cM$ can be efficiently decoded in $O \big( m + d \log^2{(n/d)} \big)$ time with $m = O \big( d^2 \min(n, d) \big)$. Therefore, the number of tests the non-adaptive design is:
\begin{align}
a + a^2 + am = O\left( d^3 \log{\frac{n}{d}} \cdot \min(n, d) \right). \nonumber
\end{align}

To identify $i_1$ and $i_2$, it takes $O(a^2) = O\big( d^2 \log^2{\frac{n}{d}} \big)$ time. Because of the construction of $\cM$, it takes $2 O\big( d^2 \min\{ \log{n}, (\log_d{n})^2 \} + d \log^2{(n/d)} \big)$ time to run $\dec(\test(\cM \vee \cA(i_1, :)), \cM)$ and $\dec(\test(\cM \vee \cA(i_2, :)), \cM)$. Therefore, it takes $O\left( d^2 \left(  \log^2{\frac{n}{d}} + \min(n, d) \right) \right)$ time to run the non-adaptive design.

\subsection{2-stage design}
\label{subsub:efficientDet:2Stage}

\textbf{Encoding and decoding:} In the first stage, we utilize matrices $\cA, \cM$, the indicator vector $\bY$, the reference vector $\bR$ in Section~\ref{sec:efficientDet:1Stage}. However, we amend vector $\bF$ as follows. We initialize a $(d, d, n)$-list disjunct matrix $\cH$ of size $h \times n$ as in Theorem~\ref{thm:listDisjunct} which can be decoded in $O\big( d \log^2{(n/d)} \big)$ time with $h = O\big( d \log{(n/d)} \big)$. Then we set $\bF = \test(\cH \vee \cA)$. 

In the second stage, we first find two indices $i_1 \neq i_2 \in [a]$ by scanning the indicator vector $\bY$ and $\bR$ such that $\test(\cA(i_1, :)) = \test(\cA(i_2, :)) = 0$ and $\test(\cA(i_1, :) \vee \cA(i_2, :)) = 1$. Then the approximate defective set is $R = \dec(\test(\cH \vee \cA(i_1, :)), \cH) \cup \dec(\test(\cH \vee \cA(i_2, :)), \cH)$. Finally, we test every two items in $R$. If a test on two items is positive then we add them to the defective set.

\textbf{Correctness:} Similar to the analysis in Section~\ref{sec:efficientDet:1Stage}, when $\test(\cA(i_1, :)) = \test(\cA(i_2, :)) = 0$ and $\test(\cA(i_1, :) \vee \cA(i_2, :)) = 1$, we get $|\cA(i_1, :) \cap P| = 1$ and $|\cA(i_2, :) \cap P| = 1$. Let $\{ p_1 \} = \cA(i_1, :) \cap P$ and $\{ p_2 \} = \cA(i_2, :) \cap P$. It is straightforward that $p_1 \neq p_2$.

When $|\supp(\bX) \cap P| = u - 1$, it is easy to confirm that $\cT \otimes \bX = \cT \odot \vecc(P \setminus (\supp(\bX) \cap P))$ for any matrix $\cT$. Therefore, we have $\test(\cH \vee \cA(i_1, :)) = \cH \odot \vecc(P \setminus (\cA(i_1, :) \cap P)) = \cH \odot \vecc(P \setminus \{ p_1 \})$ and $\test(\cH \vee \cA(i_2, :)) = \cH \odot \vecc(P \setminus (\cA(i_2, :) \cap P)) = \cH \odot \vecc(P \setminus \{ p_2 \})$. Since $\cH$ is a $(d, d)$-list disjunct matrix, we get $\dec(\test(\cH \vee \cA(i_k, :)), \cH) = P \setminus \{ p_k \}$ and $|\dec(\test(\cH \vee \cA(i_k, :)), \cH)| \leq 2d - 1$ for $k = 1, 2$. Therefore, $|R| \leq 3d - 2$. Since we are considering the case without gap with $u = 2$, if a test on two items is positive, the two items must be defective. On the other hand, since $R \cap P = P$, all pairs of the defective items will be tested, and therefore, by adding every two items in a test that yields a positive outcome will recover the defective set.

\textbf{Complexity:} The number of tests is:
\begin{align}
a + a^2 + a h + {|R| \choose 2} = O\left( d^2 \log^2{\frac{n}{d}} \right). \nonumber
\end{align}

In the second stage, to identify $i_1$ and $i_2$, it takes $O(a^2) = O\big( d^2 \log^2{\frac{n}{d}} \big)$ time. Because of the construction of $\cH$, it takes $O\big( d \log^2{(n/d)} \big)$ time to decode $\dec(\test(\cH \vee \cA(i_1, :)), \cH)$ and $ \dec(\test(\cH \vee \cA(i_2, :)), \cH)$. Finally, it takes $O\big( {|R| \choose 2} \big) = O({3d - 2 \choose 2}) = O(d^2)$ time to get the defective set $P$. In summary, it takes $O\big( d^2 \log^2{\frac{n}{d}} \big) + O\big( d \log^2{(n/d)} \big) + O(d^2) = O\big( d^2 \log^2{\frac{n}{d}} \big)$ time to run the 2-stage design.

\section{Conclusion}
\label{sec:cls}

In this paper, we first present efficient decoding designs with a polynomial time of $\min\{u^u, (d - u)^{d - u}\}, d$, and $\log{n}$ for TGT without gap under the deterministic design, the combinatorial setting, and the exact recovery criterion. It is inciting to extend this work for the case TGT with a gap. Another interesting question is whether the designs here are applicable to find hidden graphs in learning theory.

\bibliographystyle{ieeetr}
\balance
\bibliography{bibli}

\begin{thebibliography}{10}

\bibitem{dorfman1943detection}
R.~Dorfman, ``The detection of defective members of large populations,'' {\em
  The Annals of mathematical statistics}, vol.~14, no.~4, pp.~436--440, 1943.

\bibitem{malioutov2017learning}
D.~M. Malioutov, K.~R. Varshney, A.~Emad, and S.~Dash, ``Learning interpretable
  classification rules with {B}oolean compressed sensing,'' {\em Transparent
  Data Mining for Big and Small Data}, pp.~95--121, 2017.

\bibitem{shin2009reactive}
I.~Shin, Y.~Shen, Y.~Xuan, M.~T. Thai, and T.~Znati, ``Reactive jamming attacks
  in multi-radio wireless sensor networks: An efficient mitigating measure by
  identifying trigger nodes,'' in {\em ACM International Workshop on
  Foundations of Wireless ad hoc and Sensor Networking and Computing},
  pp.~87--96, 2009.

\bibitem{damaschke2006threshold}
P.~Damaschke, ``Threshold group testing,'' in {\em General theory of
  information transfer and combinatorics}, pp.~707--718, Springer, 2006.

\bibitem{shental2020efficient}
N.~Shental, S.~Levy, V.~Wuvshet, S.~Skorniakov, B.~Shalem, A.~Ottolenghi,
  Y.~Greenshpan, R.~Steinberg, A.~Edri, R.~Gillis, {\em et~al.}, ``Efficient
  high-throughput {SARS-CoV-2} testing to detect asymptomatic carriers,'' {\em
  Science advances}, vol.~6, no.~37, p.~eabc5961, 2020.

\bibitem{du2000combinatorial}
D.~Du, F.~K. Hwang, and F.~Hwang, {\em Combinatorial group testing and its
  applications}, vol.~12.
\newblock World Scientific, 2000.

\bibitem{d2019separable}
A.~D’yachkov, N.~Polyanskii, V.~Shchukin, and I.~Vorobyev, ``Separable codes
  for the symmetric multiple-access channel,'' {\em IEEE Transactions on
  Information Theory}, vol.~65, no.~6, pp.~3738--3750, 2019.

\bibitem{d2013superimposed}
A.~D’yachkov, V.~Rykov, C.~Deppe, and V.~Lebedev, ``Superimposed codes and
  threshold group testing,'' {\em Information Theory, Combinatorics, and Search
  Theory: In Memory of Rudolf Ahlswede}, pp.~509--533, 2013.

\bibitem{cheraghchi2013improved}
M.~Cheraghchi, ``Improved constructions for non-adaptive threshold group
  testing,'' {\em Algorithmica}, vol.~67, no.~3, pp.~384--417, 2013.

\bibitem{chen2009nonadaptive}
H.-B. Chen and H.-L. Fu, ``Nonadaptive algorithms for threshold group
  testing,'' {\em Discrete Applied Mathematics}, vol.~157, no.~7,
  pp.~1581--1585, 2009.

\bibitem{bui2019efficiently}
T.~V. Bui, M.~Kuribayashi, M.~Cheraghchi, and I.~Echizen, ``Efficiently
  decodable non-adaptive threshold group testing,'' {\em IEEE Transactions on
  Information Theory}, vol.~65, no.~9, pp.~5519--5528, 2019.

\bibitem{de2020subquadratic}
G.~De~Marco, T.~Jurdzi{\'n}ski, D.~R. Kowalski, M.~R{\'o}{\.z}a{\'n}ski, and
  G.~Stachowiak, ``Subquadratic non-adaptive threshold group testing,'' {\em
  Journal of Computer and System Sciences}, vol.~111, pp.~42--56, 2020.

\bibitem{de2005optimal}
A.~De~Bonis, L.~Gasieniec, and U.~Vaccaro, ``Optimal two-stage algorithms for
  group testing problems,'' {\em SIAM Journal on Computing}, vol.~34, no.~5,
  pp.~1253--1270, 2005.

\bibitem{d1982bounds}
A.~G. D'yachkov and V.~V. Rykov, ``Bounds on the length of disjunctive codes,''
  {\em Problemy Peredachi Informatsii}, vol.~18, no.~3, pp.~7--13, 1982.

\bibitem{porat2011explicit}
E.~Porat and A.~Rothschild, ``Explicit nonadaptive combinatorial group testing
  schemes,'' {\em IEEE Transactions on Information Theory}, vol.~57, no.~12,
  pp.~7982--7989, 2011.

\bibitem{indyk2010efficiently}
P.~Indyk, H.~Q. Ngo, and A.~Rudra, ``Efficiently decodable non-adaptive group
  testing,'' in {\em ACM-SIAM Symposium on Discrete Algorithms},
  pp.~1126--1142, SIAM, 2010.

\bibitem{ngo2011efficiently}
H.~Q. Ngo, E.~Porat, and A.~Rudra, ``Efficiently decodable error-correcting
  list disjunct matrices and applications - (extended abstract),'' in {\em
  Automata, Languages and Programming - 38th International Colloquium, {ICALP}
  2011, Zurich, Switzerland, July 4-8, 2011, Proceedings, Part {I}} (L.~Aceto,
  M.~Henzinger, and J.~Sgall, eds.), vol.~6755 of {\em Lecture Notes in
  Computer Science}, pp.~557--568, Springer, 2011.

\bibitem{cheraghchi2020combinatorial}
M.~Cheraghchi and V.~Nakos, ``Combinatorial group testing and sparse recovery
  schemes with near-optimal decoding time,'' in {\em 2020 IEEE 61st Annual
  Symposium on Foundations of Computer Science (FOCS)}, pp.~1203--1213, IEEE,
  2020.

\bibitem{cai2017efficient}
S.~Cai, M.~Jahangoshahi, M.~Bakshi, and S.~Jaggi, ``Efficient algorithms for
  noisy group testing,'' {\em {IEEE} Trans. Inf. Theory}, vol.~63, no.~4,
  pp.~2113--2136, 2017.

\bibitem{bondorf2020sublinear}
S.~Bondorf, B.~Chen, J.~Scarlett, H.~Yu, and Y.~Zhao, ``Sublinear-time
  non-adaptive group testing with $o(k \log{n})$ tests via bit-mixing coding,''
  {\em IEEE Transactions on Information Theory}, vol.~67, no.~3,
  pp.~1559--1570, 2020.

\bibitem{chan2013stochastic}
C.~L. Chan, S.~Cai, M.~Bakshi, S.~Jaggi, and V.~Saligrama, ``Stochastic
  threshold group testing,'' in {\em 2013 IEEE Information Theory Workshop
  (ITW)}, pp.~1--5, IEEE, 2013.

\bibitem{reisizadeh2018sub}
A.~Reisizadeh, P.~Abdalla, and R.~Pedarsani, ``Sub-linear time stochastic
  threshold group testing via sparse-graph codes,'' in {\em 2018 IEEE
  Information Theory Workshop (ITW)}, pp.~1--5, IEEE, 2018.

\bibitem{kautz1964nonrandom}
W.~Kautz and R.~Singleton, ``Nonrandom binary superimposed codes,'' {\em IEEE
  Transactions on Information Theory}, vol.~10, no.~4, pp.~363--377, 1964.

\bibitem{stinson2004generalized}
D.~R. Stinson and R.~Wei, ``Generalized cover-free families,'' {\em Discrete
  Mathematics}, vol.~279, no.~1-3, pp.~463--477, 2004.

\bibitem{d2002families}
A.~D'yachkov, P.~Vilenkin, D.~Torney, and A.~Macula, ``Families of finite sets
  in which no intersection of $\ell$ sets is covered by the union of $s$
  others,'' {\em Journal of Combinatorial Theory, Series A}, vol.~99, no.~2,
  pp.~195--218, 2002.

\bibitem{dyachkov1983survey}
A.~G. Dyachkov and V.~V. Rykov, ``A survey of superimposed code theory,'' {\em
  Problems of Control and Information Theory}, vol.~12, no.~4, pp.~1--13, 1983.

\bibitem{rashad1990random}
A.~Rashad, ``Random coding bounds on the rate for list-decoding superimposed
  codes,'' {\em PROBLEMS OF CONTROL AND INFORMATION THEORY-PROBLEMY UPRAVLENIYA
  I TEORII INFORMATSII}, vol.~19, no.~2, pp.~141--149, 1990.

\bibitem{cheraghchi2013noise}
M.~Cheraghchi, ``Noise-resilient group testing: Limitations and
  constructions,'' {\em Discrete Applied Mathematics}, vol.~161, no.~1-2,
  pp.~81--95, 2013.

\bibitem{bui2023concomitant}
T.~V. Bui and J.~Scarlett, ``Concomitant group testing,'' {\em IEEE
  Transactions on Information Theory}, pp.~--, 2024.

\end{thebibliography}

%\pagebreak

\appendices

\section{Proof of Lemma~\ref{lem:CGTwithSet}}
\label{app:LemmaCGT}

Consider $y_i = \test(\cM(i, :) \vee \bX) = \test( \supp(\cM(i, :)) \cup \supp(\bX) )$ for $i = 1, \ldots, t$. Since $\cM$ is a $p$-disjunct matrix, For every defective item $r \in P$, there exists a row $i_r$, called \emph{a private row}, such that that row contains $r$ and does not contain the remaining items in $P \setminus \{r \}$. Therefore, if $|\supp(\bX) \cap P| < u - 1$, $y_{i_r} = 0$. Consequently, any decoding procedure with the exact recovery in standard group testing with the measurement matrix $\cM$ must declare item $r$ as non-defective. Therefore, the decoding procedure returns an empty set.

When $|\supp(\bX) \cap P| = u - 1$, for any $r \in \supp(\bX) \cap P$, we get $y_{i_r} = 0$ because row $i_r$ contains only $u - 1$ defective items in $\supp(\bX) \cap P$. Therefore, any item in $\supp(\bX) \cap P$ will not be included in the returned defective set by any decoding procedure with the exact recovery in standard group testing. On the other hand, for any $r^\prime \in P \setminus \supp(\bX)$, every test that includes $r^\prime$ is positive. Because $\cM$ is a $p$-disjunct matrix, item $i^\prime$ will be included in the returned defective set by any decoding procedure with the exact recovery in standard group testing. In other words, the returned defective set is $P \setminus \supp(\bX)$ if $|\supp(\bX) \cap P| = u - 1$.

% that's all folks
\end{document}